\title{Simultaneous Contests with Equal Sharing Allocation of Prizes:\\
Computational Complexity and Price of Anarchy}
\author{
    %Authors
    % All authors must be in the same font size and format.
    Edith Elkind, Abheek Ghosh, Paul Goldberg
    % Paper 5337
}
\newtheorem{theorem}{Theorem}
\newtheorem{lemma}[theorem]{Lemma}
\newtheorem{corollary}[theorem]{Corollary}
\theoremstyle{definition}
\newtheorem{definition}{Definition}[section]
\DeclareMathOperator{\ba}{\bm{a}}
\DeclareMathOperator{\bb}{\bm{b}}
\DeclareMathOperator{\b1}{\mathbbm{1}}
\DeclareMathOperator{\bR}{\mathbb{R}}
\DeclareMathOperator{\bZ}{\mathbb{Z}}
\DeclareMathOperator{\cC}{\mathcal{C}}
\DeclareMathOperator{\cB}{\mathcal{B}}
\DeclareMathOperator{\cP}{\mathcal{P}}
\DeclareMathOperator{\cS}{\mathcal{S}}
\DeclareMathOperator{\cU}{\mathcal{U}}
\begin{document}

\maketitle

\begin{abstract}
    We study a general scenario of simultaneous contests having equal sharing allocation of prizes: each contest equally distributes its prize to all players who satisfy some contest-specific criterion. The players produce outputs along a set of activities, and the contests have their criteria based on these outputs. We consider two variations of the model: (i) where the players have costs for producing outputs; (ii) where the players do not have costs but have generalized budget constraints. Similar contests have been used to model competition among social media curators to win followers~\cite{may2014filter}, among other practical settings. These games always have a pure-strategy Nash equilibrium; while the price of anarchy is $2$ for the budget model, it can be unbounded for the cost model. On the computational complexity side, we give NP-hardness results for best-response for general versions of the model. For natural restricted versions, where best response is easy to compute, we show that finding a pure-strategy Nash equilibrium is PLS-complete, and finding a mixed-strategy Nash equilibrium is complete for PPAD$\cap$PLS.
    %This adds another natural problem to the class (PPAD$\cap$PLS)-complete, which includes gradient descent and explicit congestion games.
\end{abstract}

\noindent 

\section{Introduction}
Contests are games where players, who are assumed to be strategic, make costly and irreversible investments to win valuable prizes~\cite{vojnovic2015contest}. Typically, the prizes have monetary value, which incentivizes the players to make the costly investments. In some scenarios, the prizes may be associated with reputation and social status. For example, many online forums and websites depend upon user-generated content to provide value to their customers, and award badges, which do not have any monetary value but provide social reputation (e.g., StackOverflow and Quora);
see, e.g., \cite{immorlica2015social}.

In the presence of multiple simultaneous contests, 
each player may explicitly select one or more contests and invest efforts so as to win
the associated prizes. Moreover, sometimes contest participation is implicit:
players engage in various activities, and each contest awards prizes to some of the players based on their performance in a specific subset of activities.  

Consider, for instance, the setting where several social media platforms or news websites compete to attract customers. The potential customers are not homogeneous:
e.g., some may be interested in politics, while others focus on sports or technology.
It is therefore natural to model this setting as a set of simultaneous contests, 
with each individual contest corresponding to a group of customers with similar preferences. The platforms can take actions that make them more attractive to potential customers. Indeed, some of the actions, such as improving the interface, or increasing the update frequency, may impact the platform's performance with respect to several customer groups. That is, we can think of platforms as engaging in several {\em activities}, with their performance in each contest depending on the effort they invest in these activities. Different customer groups may value different mixtures of activities in different ways: e.g., while consumers of financial and sports news care about frequent updates, those who read the gossip column are happy with daily or even weekly updates. Thus, by increasing her investment in an activity, a player may 
improve her performance in several---but not all---contests.

In this work, we study a formal model that can capture scenarios of this type. In our model, there are several players and several simultaneous contests, as well as a set of activities. Each player selects their effort level for each activity (and may incur a cost for doing so, or face budget constraints), and each contest $j$ has its own success function, which specifies combinations of effort for each activity that are sufficient to succeed in $j$. In addition to that, we assume that each contest allocates identical prizes to all agents
that meet its criteria\footnote{The motivation for equal sharing allocation is somewhat similar to that of proportional allocation (e.g., in Tullock contests~\cite{tullock1980efficient}), with equal sharing becoming relevant in situations where the contests use explicit rules (or criteria) to decide on an allocation.}: 
e.g., if, to win contest $j$, it suffices to produce $2$ units along activity $\ell$, then the player who produces $3$ units along $\ell$ and the player who produces $30$ units along $\ell$ receive the same prize from $j$; however,  the value of the prize in contest $j$ may depend on the number of players who 
meet the criteria of $j$. We refer to this setting as {\em multi-activity games}. 

From the perspective of the player, we distinguish between two models: (1) the \textit{cost model}, which has cost functions for producing output, and (2) the \textit{budget model}, which has generalized budget constraints (feasible subsets of the space). The cost functions and the budget constraints may be different across players, capturing the fact that some players may be able to perform better in some activities compared to others.
% TODO: cost-budget

Our model is very general: we impose very mild and natural constraints on the contests' success functions and prize allocation rules. In particular, we assume that the total value of the prizes allocated to the winners of each contest is a monotonically non-decreasing and concave function of the number of winners;
these assumptions capture the idea that the value of an award decreases as it gets awarded to a larger number of players (see, e.g., \cite{immorlica2015social}).
A specific instantiation of this model is a contest that has a \textit{fixed total prize} of $V$, and every winner gets a share of $V/k$ if there are $k$ winners.

Besides our social media platform example, several other situations can be modeled using contests with equal sharing allocation of prizes, where players compete by engaging in activities valued by multiple contests. For instance,  funding agencies are generally not able to \textit{perfectly discriminate} among the applicants to select the most deserving ones. They might score candidates based on attributes such as strength of proposal, publication history, and management experience, and allocate their budget to one of the eligible candidates or distribute it among them. Because of this uniform distribution of the funding, from the perspective of an expected utility maximizing applicant, this situation can be approximated using a contest with an equal sharing allocation with a fixed total prize.

\subsection{Our Results}
We observe that, under mild assumptions, multi-activity games are \textit{exact potential games}~\cite{monderer1996potential} (see Definition~\ref{def:potentialGames}). This guarantees the existence of a pure-strategy Nash equilibrium (PNE). 
Moreover, an approximate PNE can be computed in pseudo-polynomial time. In fact, a sequence of $\epsilon$-better/best\footnote{An $\epsilon$-better response move increases the utility of a player by at least $\epsilon$.} response moves converges to an $\epsilon$-PNE, with the number of steps proportional to $1/\epsilon$ (and pseudo-polynomial in other parameters).
For the budget model, for the natural definition of social welfare in these games, we observe that the price of anarchy (PoA) is at most $2$ and the price of stability (PoS) can be close to $2$, so both these bounds are tight.\footnote{The price of anarchy (stability) is the ratio between the social welfare of the optimal solution and the social welfare of an equilibrium solution, in the worst case over instances of the problem, and in the worst (best) case over corresponding equilibrium solutions.} 
However, for the cost model, for meaningful definitions of social welfare, the PoS can be infinite. 

% For the budget model, for the natural definition of social welfare in these games, we observe that the social welfare is a submodular function (as defined by \citet{vetta2002nash}). And, using \citet{vetta2002nash}'s result, the budget model has a price of anarchy (PoA) at most $2$ (tight). 
%
%Only the PoA results differ between the cost and the budget models, as mentioned in the previous paragraph.
% This result on PoA is the only result that is different between the cost and the budget models. Therefore, for the rest of this section, we focus only on the budget model. 

%On the other hand, all our negative results are proved for the case of an equal sharing allocation with a fixed total prize (where every winner gets a $1/k$ fraction of the prize if there are $k$ winners). 
% Therefore, for a clearer presentation, we do not explicitly mention a fixed total prize every time we mention the (negative) results.

We then study the computation complexity of finding an equilibrium in these games, which is the main focus of our paper. This portion of the paper concentrates on a specific instantiation of the model, where the contests' criteria and the players' budget constraints are all linear; let us call this model the \textit{linear model}. (Here, we discuss the results for the budget model. With similar assumptions, similar results hold for the cost model as well.)
%A simple linear budget constraint bounds the total output of a player across the activities. The criterion of a contest says that a (contest-specific) weighted combination of the outputs should be above some threshold (say $1$, w.l.o.g.). We prove all our complexity lower results for this restricted model.
%Our positive results hold for general equal sharing allocation, where the value of a contest's prize to an individual player (weakly) decreases as the number of winners increases. 
We show that, for the linear model, it is NP-hard for a player to best respond to the strategies of other players. This hardness result holds even for a game with only one player; in other words, the hardness is due to the optimization problem that a player faces while playing the game. We also prove that there exists no polynomial-time approximation scheme unless P$=$NP. (Here, NP-hardness for best-response directly implies NP-hardness for equilibrium computation.) On the positive side, we obtain fixed-parameter tractability results: we show that best response is polynomial-time computable if either the number of contests or the number of activities is a constant.

The NP-hardness result for best-response motivates us to further restrict our model: 
we assume that a player can produce output only along a small (polynomial) number of portfolios of activities. Mathematically, a portfolio corresponds to a direction in the activity space along which a player can produce output. This restricted model captures salient features of the original model: e.g., it maintains the property that the contests can have overlapping criteria.

With a simple transformation of the activity space, the portfolio model can be converted to an equivalent model where a player produces output along a single activity only, i.e., only along the axes, where an activity in the new model corresponds to a portfolio in the old model. In our discussion, we call this new model the \textit{single-activity model} to differentiate it from the original model, which we call the \textit{multi-activity model}.

The positive results for the multi-activity model automatically carry over to the single-activity model. Additionally, it is computationally easy for the players to best respond in the single-activity model. However, we get a different hardness result. Even for the linear model, it is PLS-complete to compute a PNE and (PPAD$\cap$PLS)-complete to compute a mixed-strategy Nash equilibrium (MNE).
These hardness results, particularly the (PPAD$\cap$PLS)-hardness result, are interesting because single-activity games form a strict,  structured and well-motivated subclass of explicit congestion games (a contest awards a $1/k$ fraction of a fixed prize to each winner if there are $k$ winners, but a congestion game can have a cost that is an arbitrary function of the number of winners), yet finding an MNE in these games has the same computational complexity as finding an MNE of explicit congestion games~\cite{babichenko2021settling} (and finding a fixed-point of gradient descent~\cite{fearnley2021complexity}).
We also prove some fixed-parameter tractability results with respect to the number of players and the number of contests.

% This hardness result adds another natural and practically relevant problem to the class of (PPAD$\cap$PLS)-complete problems, which includes finding a fixed-point of gradient descent~\cite{fearnley2021complexity} and finding an MNE of explicit congestion games~\cite{babichenko2021settling}. Indeed, the class of single-activity games is a strict subset of the class of explicit congestion games: a contest awards a $1/k$ fraction of a fixed prize to each winner if there are $k$ winners, but a congestion game can have a cost that is an arbitrary function of the number of winners.

The rest of this paper is organized as follows. 
After summarizing the related work (Section~\ref{sec:related}), in Section~\ref{sec:modelpnepoa} we introduce the general multi-activity model. We also prove the existence of PNE, the pseudo-polynomial convergence of $\epsilon$-best-response dynamics to $\epsilon$-PNE, and present our PoA results. In Section~\ref{sec:multiActivity} we establish the hardness of best-response in linear multi-activity models. Section~\ref{sec:singleActivity} focuses on the single-activity model and presents our results on PLS-completeness and (PPAD$\cap$PLS)-completeness. 
Some proofs are omitted due to space constraints.

\subsection{Related Work}\label{sec:related}
The model of simultaneous contests with equal sharing allocation of prizes has been studied before in the literature~\cite{may2014filter,vojnovic2015contest}. At a high level, our contribution is to (i) generalize the model and extend the positive results and (ii) study the complexity of computing equilibria.

The linear budget model with a fixed total prize has previously been studied by May et al.~\cite{may2014filter}, to model situations such as the social media platform example discussed earlier. Their theoretical results are similar to our positive results: 
(i) they prove existence of PNE by showing that the game is an exact potential game; (ii) they establish a PoA bound of $2$. For (i), existence of PNE, we give a simpler proof by explicitly constructing the potential function that lifts these results to our general model. Our proof also makes it transparent that the PNE exists because of the equal sharing property of the contests (the \textit{congestion} property), and the other restrictions of the model of May et al.~\cite{may2014filter}---the linear budget constraints, the linear criteria of contests, and the fixed total prize---are not necessary for the result. Moreover, the proof clarifies that using the budget model is also non-essential, as the result holds for the cost model as well. For (ii), the PoA bound, May et al.~\cite{may2014filter} prove the result from first principles. In contrast, we use the result of Vetta~\cite{vetta2002nash} for submodular social welfare functions to derive the same result in our---more general---setting.
In summary, we extend the positive results of \cite{may2014filter} to a general model (and we also study computational complexity, which was not considered by May et al.~\cite{may2014filter}).
May et al.~\cite{may2014filter} perform an empirical study and show that the real-life behavior of social media curators resembles the predictions of the model. 
  Bilo et al.~\cite{bilo2019project} consider a model similar to the single-activity model of our paper and show inclusion in the class PLS (but no hardness results).
%  They introduced a budget version of the model to study competition among social media curators to acquire followers. They prove the existence of a PNE by showing equivalence to an exact potential game (see Definition~\ref{def:potentialGames}), and give the PoA bound of $2$. They also do an empirical study and show that the real-life behavior resembles the predictions of the model. The book by \citet{vojnovic2015contest}, Chapter 5 Section 5.5, presents a slightly generalized model with similar results; our budget model generalizes both these budget models. To our knowledge, the work of \citet{may2014filter} is the only paper that studies a three-component player--activity--contest model.

% \abheek{\citet{vetta2002nash} is not necessary to discuss here. It is a technical result we use in our proof.}

Models of simultaneous contests that do not have activities (i.e., the players directly produce outputs for each contest, or, equivalently, there is a one-to-one mapping between the activities and the contests) have been extensively studied. Cost-based models where the prizes are awarded based on the players' ranks have been considered by a number of authors \cite{amann2008parallel,archak2009optimal,bapna2010vertically,dipalantino2009crowdsourcing}, including empirical work \cite{archak2009optimal,dipalantino2009crowdsourcing,lakhani2010topcoder,yang2008crowdsourcing}. Colonel Blotto games, where the players have budget constraints and the prize is awarded to the highest-output player for each contest, were proposed by Borel~\cite{borel1921analyse}, and have received a significant amount of attention in the literature (e.g., \cite{borel1938application,tukey1949problem,gross1950continuous,blackett1954some,blackett1958pure,bellman1969colonel,shubik1981systems,roberson2006colonel,hart2008discrete,adamo2009blotto,roberson2006colonel,hortala2012pure,washburn2013or}). Simultaneous contests with proportional allocation have been studied by, e.g., \cite{feldman2005price,zhang2005efficiency,palvolgyi2010strategic}.

Two very recent related papers are by Birmpas et al.~\cite{birmpas2022parallel} and Elkind et al.~\cite{elkind2021contest}. Both these papers do not have activities, i.e., the players produce output directly for the contests, which makes their models a bit different (simpler) than ours, but they add complexity along other dimensions. Therefore, their results are not directly comparable to ours. Birmpas et al.~\cite{birmpas2022parallel} have both budgets and costs in the same model, and they give a constant factor PoA bound by augmenting players' budgets when computing the equilibrium welfare (but not when computing the optimal welfare). % and in a way circumvent the unbounded PoA result for the cost model in our paper.
Elkind et al.~\cite{elkind2021contest} consider a model with only one contest and in the case of incomplete information. Their focus is on mechanism design, and for one of the objectives studied in the paper, they prove that the optimal contest distributes its prize equally to all players who produce output above some threshold, similar to the contests in our paper.

The complexity class PLS (Polynomial Local Search) and the concepts of PLS-hardness and PLS-completeness were introduced by Johnson et al.~\cite{johnson1988easy}. PLS consists of discrete local optimization problems whose solutions are easy to verify (the cost of a given solution can be computed in polynomial time and its local neighborhood can be searched in polynomial time). Similar to NP-hard problems, PLS-hard problems are believed to be not solvable in polynomial time. Several natural problems, such as finding a locally optimal solution of {\sc Max-Cut}, were shown to be PLS-complete by Schaffer~\cite{schaffer1991simple}. The problem of finding a PNE in explicit congestion games (which always have a PNE) is also PLS-complete~\cite{fabrikant2004complexity}, from which it follows that better or best response dynamics take an exponential time to converge in the worst case~\cite{johnson1988easy}.

The class PPAD (Polynomial Parity Arguments on Directed graph) was introduced by Papadimitriou~\cite{papadimitriou1994complexity}. Like PLS problems, PPAD problems always have solutions. For PPAD, the existence of a solution is based on a parity argument: In a directed graph where each vertex has at most one predecessor and one successor, 
if there exists a source vertex (i.e., a vertex with no predecessor), then
there exists some other degree-1 vertex.
One of the most well-known results in algorithmic game theory
is that the problem of finding a mixed-strategy Nash equilibrium (MNE) is PPAD-complete~\cite{daskalakis2009complexity,chen2009settling}. 

Recent work has determined the complexity of computing an MNE of an explicit congestion game. The class PPAD$\cap$PLS represents problems that can be solved both by an algorithm that solves PPAD problems and by an algorithm that solves PLS problems. 
Finding an MNE of an explicit congestion game is in PPAD$\cap$PLS: indeed,  this problem can be solved either by finding a PNE (which is also an MNE) using an algorithm for PLS problems or by computing an MNE using an algorithm for PPAD problems.
Recently, Fearnley et al.~\cite{fearnley2021complexity} proved that finding a fixed-point of a smooth $2$-dimensional function $f : [0,1]^2 \rightarrow \bR$ using gradient descent is complete for the class PPAD$\cap$PLS; based on this result, Babichenko and Rubinstein~\cite{babichenko2021settling} proved that finding an MNE of an explicit congestion game is also (PPAD$\cap$PLS)-complete.
PLS-complete, PPAD-complete, and (PPAD$\cap$PLS)-complete problems are considered to be hard problems with no known polynomial-time algorithms.
\section{Model and Preliminaries}\label{sec:prelim}
Consider a set of $n$ players, $N$, who simultaneously produce output along $k$ activities, $K$, and let there be $m$ contests, $M$, that award the prizes to the players based on their outputs. The contests may have different prizes and may value the activities differently.
Typically, a player's output along some activity may make various contributions to their ability to win different contests.
%Note that this model generalizes a model where the number of activities is equal to the number of contests and there is a one-to-one mapping between the contests and the activities: contest $j$ allocates its prize based only on the players' outputs along activity $j$.

We study two types of model: one, where the players have output production costs; and the other, where the players have generalized output budgets. Player $i \in N$ chooses an output vector $\bb_i = (b_{i,\ell})_{\ell \in K} \in \bR_+^k$ (we shall assume throughout the paper that $\bR_+$ includes $0$).
\begin{itemize}
    \item \textbf{Cost.} Player $i$ incurs a cost of $c_i(\bb_i)$ for producing $\bb_i$, where $c_i : \bR_+^k \rightarrow \bR_+$ is a non-decreasing cost function with $c_i(\bm{0}) = 0$. We shall prove our computational hardness and unbounded PoA results for a simpler model with linear cost functions, where $c_i(\bb_i) = \sum_{\ell \in K} c_{i,\ell} b_{i,\ell}$, but the existence of PNE shall hold for the general model.
    \item \textbf{Budget.} The player does not incur any cost for the output vector $\bb_i$, but $\bb_i$ is restricted to be in a set $\cB_i \subseteq \bR_+^k$. In this model too, the positive results hold for general $\cB_i$, but the negative results hold for a simple model where the players are identical and with a budget of $1$ that can be split amongst the activities. 
\end{itemize}
Let $\bb = (\bb_i)_{i \in N} = (b_{i,\ell})_{i \in N, \ell \in K}$. A contest $j \in M$ equally distributes the prize of $v_j > 0$ to all players in the set $N_j(\bb)$ defined as
\[ 
    N_j(\bb) = \{ i \mid f_j(\bb_i) \ge 1 \},
\]
where $f_j(\bb_i) : \bR_+^k \rightarrow \bR_+$ is an increasing function with $f_j(\bm{0}) = 0$. Let $n_j(\bb) = |N_j(\bb)|$. Again, we shall prove the negative results for a simpler model where $f_j$ is a linear function given as $f_j(\bb_i) = \sum_{\ell \in K} w_{j,\ell} b_{i,\ell}$ where $w_{j,\ell} \in \bR_+$, but the results of Section~\ref{sec:pnepoa} related to the existence of PNE and low PoA hold for general $f_j$.

The utility of a player $i$ in the cost model and the budget model are, respectively,
\begin{align*}
    u_i(\bb) &= \sum_{j \in M} \frac{v_j}{n_j(\bb)} \b1\{i \in N_j(\bb)\} - c_i(\bb_i),\\
    u_i(\bb) &= \sum_{j \in M} \frac{v_j}{n_j(\bb)} \b1\{i \in N_j(\bb)\},
\end{align*}
where $\b1\{\ldots\}$ is the indicator function.

\citet{may2014filter} study the budget model and assume that the set $\cB_i$ of feasible output vectors for player $i$ is convex; we call this the \textit{multi-activity} budget model.\footnote{The model presented in \citet{vojnovic2015contest} has $N_j(\bb) = \{ i \mid \sum_{\ell \in K} f_{j,\ell}(b_{i,\ell}) \ge 1 \}$, where $f_{j,\ell} : \bR_+ \rightarrow \bR_+$ is an increasing function with $f_{j,\ell}(0) = 0$. 
% This model generalizes our model, and therefore, our hardness results also hold for this model. Similarly, the positive results for the existence of PNE and the upper-bound for PoA for this model carries over to our model. 
\citet{may2014filter} study a similar model.
% , with similar results and techniques.
}
In this model, where a player can produce output along multiple activities simultaneously, we shall prove that it is NP-hard for a player to best-respond to the strategies of other players. This implies hardness for finding an equilibrium. We also prove a similar result for the multi-activity cost model.

Noting the hardness result, we consider a restricted setup where the players can produce output along a set of portfolios of activities. A player selects a portfolio $p$ from the set of portfolios $\cP_i \subset \bR_+^k$ and produces output proportional to $p$. We assume that the set of portfolios is finite and of polynomial size. 

This portfolio model can easily be reduced to a model where a player $i$ can produce output along only one activity, where an activity in this model corresponds to a portfolio in the previous model; we call this the \textit{single-activity} model. In the single-activity budget model, $\cB_i$ is associated with a vector $\bar{\bb}_i \in \bR_+^k$ of upper-limits; for any $b_i \in \cB_i$, there is an activity $\ell \in K$ such that $0 \le b_{i,\ell} \le \bar{b}_{i,\ell}$ and $b_{i,t} = 0$ for all $t \neq \ell$. Although the feasible set of outputs $\cB_i$ is not convex in the single-activity budget model, in Lemma~\ref{thm:sin2mulBudRed}, we shall prove that it can be reduced to the multi-activity budget model, and therefore, all positive results for the multi-activity budget model apply to this model. We shall also prove similar positive results for the cost model, both single-activity and multi-activity.
% \abheek{We can directly prove the positive results for the models also, then we will not need Theorem~\ref{thm:single2convexReduction}.}
For the single-activity models, both budget and cost, best-response is easy for a player. But, here we will prove a PLS-completeness result for finding a PNE and (PPAD$\cap$PLS)-completeness result for finding an MNE.

% A specific instantiation of this game, where the set of strategies $\cB_i$ are equal and of a particular form for all players $i \in N$, was used by \citet{may2014filter} to study information sharing in social media platforms.\footnote{They also have a player-specific quality: if a player $i$ is in $N_j(\bb)$, then contest $j$ awards (a share of) the prize with probability equal to $i$'s quality. The same results and techniques apply to this model.} They proved that this game is an \textit{exact potential game} (Definition~\ref{def:potentialGames}) and therefore has a pure strategy Nash equilibrium (PNE). They also proved that the price of anarchy (PoA) of the game is at most $2$. The generalized form of the game, as presented above, was given in \cite{vojnovic2015contest}, which also proves that the results---existence of PNE (Theorem~\ref{thm:eqShr1}) and PoA at most $2$ (Theorem~\ref{thm:eqShr2})---hold for the general formulation of the game.

\begin{definition}[Exact Potential Games]\label{def:potentialGames}\cite{monderer1996potential}
A normal form game is an exact potential game if there exists a potential function $\phi$ such that
\[ u_i(b_i', \bb_{-i}) - u_i(b_i, \bb_{-i}) = \phi(b_i', \bb_{-i}) - \phi(b_i, \bb_{-i}), \]
for any player $i$, any two strategies $b_i$ and $b_i'$ for player $i$, and any strategy profile $\bb_{-i}$ for other players, where $u_i$ denotes the utility function of player $i$. 
\end{definition}
Potential games always have a pure-strategy Nash equilibrium~\cite{monderer1996potential}.

\paragraph{Social Welfare.}
The social welfare is the sum of utilities of the players; we do not weight players differently based on their budgets or costs. For the budget models, the social welfare is equal to the total prize that gets allocated: 
\begin{multline*}
    SW(\bb) = \sum_{i \in N} u_i(\bb) = \sum_{i \in N} \sum_{j \in M} \frac{v_j}{n_j(\bb)} \b1\{i \in N_j(\bb)\} \\
        = \sum_{j \in M} v_j \b1\{n(\bb) > 0\}.
\end{multline*}
A loss in welfare occurs if some contests are not able to allocate their prize to any player. Similarly, for the cost models, the social welfare is
\begin{multline*}
    SW(\bb) = \sum_{i \in N} \left( \sum_{j \in M} v_j \b1\{i \in N_j(\bb)\} - c_i(\bb_i) \right) \\
        = \sum_{j \in M} v_j \b1\{n(\bb) > 0\} - \sum_{i \in N} c_i(\bb_i).
\end{multline*}
\section{Pure Nash Equilibrium and Price of Anarchy}\label{sec:pnepoa}
\subsection{Existence of a Pure Nash Equilibrium}
\subsubsection{Budget Model}
\citet{may2014filter} proved the existence of a PNE in a slightly restricted version of our multi-activity budget game without explicitly defining a potential function, the same proof can be extended to our model as well. For completeness, we give a proof in the appendix using the potential function given in \eqref{eq:budPotFun} and by showing that the multi-activity budget game is an exact potential game (Definition~\ref{def:potentialGames}).
\begin{theorem}\cite{may2014filter,monderer1996potential}\label{thm:multBudExPot}
A multi-activity budget game is an exact potential game, hence has a pure-strategy Nash equilibrium.
\end{theorem}
Although the feasible set of outputs $\cB_i$ is not convex in the single-activity budget game, Lemma~\ref{thm:sin2mulBudRed} proves that it can be reduced to the convex game, and therefore, it also has a PNE.
\begin{lemma}\label{thm:sin2mulBudRed}
% The model where a player cannot produce output along one activity at a time can be simulated by a model where a player can produce output along multiple activities in a feasible convex set $\cB_i$.
With respect to pure-strategy or mixed-strategy Nash equilibrium, a single-activity budget game is reducible to a multi-activity budget game.
\end{lemma}
We prove the above lemma by showing: given an arbitrary instance of a single-activity budget game, where each player can produce output only along one activity at a time, we construct a multi-activity budget game, where each player w.l.o.g. produces output only along one activity at a time in an equilibrium. Also, the actions and the utilities of the multi-activity game corresponds to the single-activity game. The proof is in the appendix.

\begin{corollary}
A single-activity budget game is an exact potential game and has a pure-strategy Nash equilibrium.
\end{corollary}

\subsubsection{Potential Function} 
\citet{may2014filter} did not explicitly provide a potential function to prove the equivalence of the multi-activity budget game to an exact potential game, but we shall need it for proving the PLS-completeness result later in the paper. (This potential function is also used to provide a proof of Theorem~\ref{thm:multBudExPot} in the appendix.)
\begin{equation}\label{eq:budPotFun}
    \phi(\bb) = \sum_{j \in M} v_j H(n_j(\bb)),
\end{equation}
where $H(p) = \sum_{\ell = 1}^p 1/\ell$ is the harmonic sum. Similarly, let us define a potential function for the cost model to be
\begin{equation}\label{eq:costPotFun}
    \phi(\bb) = \sum_{j \in M} v_j H(n_j(\bb)) - \sum_{i \in N} c_i(\bb_i).
\end{equation}
% We shall use this potential function to prove the existence of PNE, and PLS-completeness of computing it, in the single-activity and multi-activity cost games.

\subsubsection{Cost Model}
\begin{theorem}
The single-activity and multi-activity cost games are exact potential games, hence have pure-strategy Nash equilibria.
\end{theorem}
\begin{proof}
Let player $i$ make a move by changing its strategy from $\bb_i$ to $\bb_i'$, and let $\bb = (\bb_i, \bb_{-i})$ and $\bb' = (\bb_i', \bb_{-i})$ be the corresponding pure-strategy profiles. Let $S \subseteq M$ and $S' \subseteq M$ be the contests that player $i$ wins for strategy profiles $\bb$ and $\bb'$, respectively. For a contest $j \in S \setminus S'$, $n_j(\bb') = n_j(\bb) - 1$, and for a contest $j \in S' \setminus S$, $n_j(\bb') = n_j(\bb) + 1$, and for other contests, $n_j(\bb') = n_j(\bb)$. The change in potential is $\phi(\bb') - \phi(\bb)$
\begin{multline*}
    = \sum_{j \in M} v_j (H(n_j(\bb')) - H(n_j(\bb))) \\
        - \left(  c_{i}(\bb'_{i}) - c_i(\bb_i)  \right) + \sum_{\iota \in N \setminus \{i\}}  \left(  c_{\iota}(\bb'_{\iota}) -  c_{\iota}(\bb_{\iota}) \right)
\end{multline*}
\begin{equation*}
    = \sum_{j \in S' \setminus S} \frac{v_j}{n_j(\bb')} -  c_{i}(\bb'_{i}) - \left( \sum_{j \in S \setminus S'} \frac{v_j}{n_j(\bb)} - c_{i}(\bb_{i}) \right),
\end{equation*}
which is exactly equal to the change in utility of the player $i$. Note that we never used the single- or multi-activity property of the model, so the proof applies to both.
\end{proof}

\subsection{Social Welfare and Price of Anarchy}
As defined in the preliminaries, we assume the social welfare to be the total prize that gets allocated minus the total cost, 
$SW(\bb) = \sum_{j \in M} v_j \b1(n(\bb) > 0) - \sum_{i \in N} c_i(\bb_i)$. For the budget model, the cost term is not present. 
% A loss in welfare occurs if some contests are not able to allocate their prize to any player. 
The price of anarchy is $2$ for the budget model, and this is tight (see Example 5.37 of \citet{vojnovic2015contest}). For the cost model, unfortunately, the price of anarchy can be unbounded.

\subsubsection{Budget Model}
\citet{may2014filter}, for their model of the multi-activity budget game, proved that at least half of the total prize is allocated in any PNE; the same result holds for our general model even if we consider MNE (or even correlated equilibria). The proof uses ideas similar to \citet{may2014filter} to prove that the game is $(1,1)$-smooth~\cite{roughgarden2015intrinsic}, provided in the appendix. As a corollary, using Lemma~\ref{thm:sin2mulBudRed}, the same result holds for single-activity budget games.
\begin{theorem}\label{thm:swBudMul}
The social welfare in any mixed-strategy Nash equilibrium of a multi-activity budget game is at least $1/2$ of the optimum social welfare.
\end{theorem}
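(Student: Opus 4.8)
The plan is to apply the framework of Vetta~\cite{vetta2002nash} for games with submodular (``valid'') social welfare functions, which guarantees that every mixed Nash equilibrium achieves at least half the optimal social welfare, provided the four conditions listed before the theorem statement hold. The first two conditions are immediate: by construction the players' utilities and $\sw$ are measured in the same units, and by~\eqref{eq:SocialWelfareBud} the sum of the players' utilities equals $\sw(A)$, so in particular it is at most $\sw(A)$.

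For the third condition I would combine Lemma~\ref{thm:PoA:1}, which gives submodularity of $\sw$ under the weak-concavity assumption on $\ell \mapsto \ell\cdot v_j(\ell)$, with monotonicity of $\sw$, which follows directly from assumption~2 on the functions $v_j$ (that $\ell\cdot v_j(\ell)$ is non-decreasing): adding an element $g^i_j$ to an action profile only increases $n_j$ and leaves every other $n_{j'}$ unchanged, so $\sw$ cannot decrease. The fourth condition---that each player's private utility is at least her Vickrey utility at every profile---is exactly Lemma~\ref{thm:PoA:2}, which again uses the non-decreasing assumption. Having verified all four conditions, I would invoke Vetta's theorem in its mixed-equilibrium form: for such valid utility games the social welfare of any mixed Nash equilibrium is at least $\tfrac12$ of the optimum, i.e.\ $\mathrm{PoA}\le 2$, which is precisely the claimed bound.

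To transfer this conclusion back to the multi-activity budget game I would use the set-function reinterpretation introduced just before Lemma~\ref{thm:PoA:1}: an action of player $i$ is equivalent to the subset $A_i \subseteq \cG^i$ of contests she wins, since the particular output vector $\bb_i$ realizing a given winning set is irrelevant to every player's utility and to $\sw$. Hence a (mixed) equilibrium of the budget game maps to a (mixed) equilibrium of the set-function game with the same expected welfare and conversely, and the optimal welfare is preserved under this correspondence; so the $\mathrm{PoA}\le 2$ bound for the abstract game yields the theorem.

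The main obstacle I anticipate is not the verification of conditions 1--4 (routine given the two lemmas) but making the reduction to Vetta's abstract model airtight: one must confirm that the valid-utility-game guarantee is stated for mixed (indeed correlated) equilibria---it is, so this causes no real trouble---and that the translation between output vectors and winning-set subsets genuinely preserves best responses and equilibria. A small point worth stating explicitly is that $\emptyset \in \cA_i$ for every $i$, which is exactly what makes the Vickrey-utility comparison in Lemma~\ref{thm:PoA:2} (and hence condition~4) meaningful, since a player can always unilaterally deviate to the non-participation action.
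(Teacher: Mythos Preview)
Your proposal is correct and follows essentially the same approach as the paper: verify that the budget game, in its set-function reformulation, satisfies the four conditions of Vetta's valid-utility-game framework (monotonicity from the assumption that $\ell\cdot v_j(\ell)$ is non-decreasing, submodularity via Lemma~\ref{thm:PoA:1}, and the Vickrey bound via Lemma~\ref{thm:PoA:2}), then invoke Vetta's theorem. The paper's proof is terser and does not spell out the equilibrium-preserving translation between output vectors and winning sets, but the argument is the same.
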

\begin{corollary}\label{thm:swBudSin}
The social welfare in any mixed-strategy Nash equilibrium of a single-activity budget game is at least $1/2$ of the optimum social welfare.
\end{corollary}

\subsubsection{Cost Model}
\begin{theorem}\label{thm:swCost}
The social welfare in any pure-strategy Nash equilibrium of a single-activity or multi-activity cost game can be arbitrarily low compared to the optimum social welfare.
\end{theorem}
\begin{proof}
Consider an instance with two players, two activities, and two contests; $n=k=m=2$. Contest $j$ distributes a prize of $1$ to players who produce an output of $1$ along activity $j$, and does not care about output along activity $(3-j)$. Player $i$ has a cost of $(1/2 + \epsilon)$ for producing an output of $1$ along activity $i$, and a cost of $(1-\epsilon)$ for producing an output of $1$ along activity $(3-i)$.

Consider the following PNE: player $i$ is producing an output of $1$ along activity $(3-i)$ and an output of $0$ along activity $i$; therefore, each player has a cost of $(1-\epsilon)$, a prize of $1$, and a utility of $\epsilon$. Notice that this is a PNE because if any player $i$ decides to produce output along activity $i$ (which they are currently not doing), then they will get an additional prize of $1/2$ for a cost of $(1/2 + \epsilon)$, which decreases their utility. The total welfare for this equilibrium is $2 (1 - (1-\epsilon)) = 2\epsilon$.

Consider the socially optimal situation where each player $i$ produces output along activity $i$, the total prize allocation is still $2$, but the total cost is $2(1/2+\epsilon)$, so the total welfare is $(1-2\epsilon)$. Therefore, the price of anarchy is $\frac{1-2\epsilon}{2\epsilon} \rightarrow \infty$ as $\epsilon \rightarrow 0$.
\end{proof}

Finally, suppose we consider an alternative definition of social welfare that ignores the players' costs and is based only on the prizes awarded (so, is similar to the budget case). In this version, a simple example shows that the price of anarchy is still unbounded. Consider an instance with one player and two contests (and two corresponding activities), the player can win the first contest with a prize of $\epsilon$ with a cost of $\epsilon/2$ but can win the second contest with a prize of $1$ with a cost of $1$. The player only wins the first contest. The price of anarchy is $1/\epsilon \rightarrow \infty$ as $\epsilon \rightarrow 0$.

\section{Multi-Activity Games: Hardness of Best-Response}\label{sec:multiActivity}
In this section, we focus on a restricted model: each contest uses a linear criterion, and the budget constraint or the cost function of each player are also linear. We call this model the {\em linear multi-activity model}. Formally, for an output profile $\bb = (\bb_i)_{i \in N} = (b_{i,\ell})_{i \in N, \ell \in K}$, the winners of contest $j$ are $N_j(\bb) = \{ i \mid \sum_{\ell \in K} w_{j,\ell} b_{i,\ell} \ge 1 \}$,
% \[
%     N_j(\bb) = \{ i \mid \sum_{\ell \in K} w_{j,\ell} b_{i,\ell} \ge 1 \},
% \]
where $w_{j,\ell} \in \bR_{\ge 0}$ is a non-negative weight that contest $j$ has for activity $\ell$. 
Similarly, the linear budget constraint of a player $i$ is of the form $\sum_{\ell \in K} \beta_{i, \ell} b_{i, \ell} \le 1$, where $\beta_{i,\ell} \in \bR_{> 0}$. Likewise, a linear cost function for player $i$ is of the form $\sum_{\ell \in K} c_{i,\ell} b_{i, \ell}$, where $c_{i,\ell} \in \bR_{> 0}$.

We also impose another constraint: we assume that for each contest its total prize is fixed. That is, each contest $j \in M$ is associated with a total prize $V_j$, and if there are $\ell$ winners, each winner gets a prize $V_j / \ell$.

We study the computational complexity of best-response in this linear multi-activity model. Observe that it suffices to consider this problem for $n=1$. Indeed, consider a player $i\in N$. If there are $\ell$ winners other than $i$ for a given contest $j$, then $i$ gets a prize of $v_j(\ell+1) = V_j/(\ell+1)$ from this contest if she satisfies this contest's criteria, and $0$ otherwise. By scaling the values of all contests appropriately, we reduce $i$'s optimization problem to one where $i$ is the only player in the game.

In the next theorem, we prove that finding a best-response exactly or approximately is NP-hard. 
%which implies NP-hardness for finding an equilibrium in these games. 
%EE: one does not imply the other
% This indicates that, although the multi-activity model is quite natural and nicely represents practical situations and matches real-world data~\cite{may2014filter}, it may be somewhat too general. 
% In the next section, we shall study the single-activity cost and budget models.

\begin{theorem}\label{thm:hardBestResp}
In the linear multi-activity model, both cost and budget, a player cannot approximate a best response beyond a constant factor in polynomial time unless {\em P = NP}. 
\end{theorem}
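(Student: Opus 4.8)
\emph{Proof plan.} As noted just before the statement, it suffices to establish hardness for a single player, $n=1$. For $n=1$ the best-response problem becomes a pure optimization problem: the player chooses an output vector $\bb\in\bR_{\ge0}^k$ — subject to $\sum_{\ell\in K}\beta_\ell b_\ell\le1$ in the budget model, or paying the linear cost $\sum_{\ell\in K}c_\ell b_\ell$ in the cost model — so as to maximize the total prize $\sum_{j}V_j$ of the contests $j$ she wins (those with $\sum_{\ell\in K}w_{j,\ell}b_\ell\ge1$), minus her cost in the cost model. Since every weight $w_{j,\ell}$ is non-negative, each contest's winning region $\{\bb:\sum_\ell w_{j,\ell}b_\ell\ge1\}$ is upward closed, so the only obstruction to winning everything is the budget (resp.\ the cost); the problem is thus an instance of ``pick a point satisfying a maximum-value subfamily of $\ge$-inequalities, under a knapsack (resp.\ linear-penalty) constraint''. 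I will show that this problem is NP-hard to approximate to within some absolute constant factor; since an approximate best response in particular decides whether the optimum exceeds a prescribed value, the theorem follows, and, as remarked, hardness of best-response immediately yields hardness of equilibrium computation.

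The plan is a gap-preserving reduction from a maximization problem with a known constant approximation barrier — a convenient choice being \emph{Maximum Coverage} (NP-hard to approximate within $1-1/e+\epsilon$, Feige), although an APX-hard problem such as \emph{Max-$3$-Sat} or \emph{Independent Set} on bounded-degree graphs would serve equally well. From a coverage instance with universe $U$, sets $S_1,\dots,S_m$ and parameter $p$, I would build a best-response instance with one activity per set and one prize-$1$ contest $C_i$ per element $i\in U$, setting $w_{C_i,\ell}=1$ exactly when $i\in S_\ell$, so that $C_i$ is won precisely when $\sum_{\ell:\,i\in S_\ell}b_\ell\ge1$; the budget is $\sum_\ell b_\ell\le p$ (i.e.\ $\beta_\ell=1/p$), and in the cost model the common unit cost $c_\ell$ is tuned so that producing total output $p$ is always worthwhile while producing more is never beneficial, reducing that case to the budget analysis. \emph{Completeness} is immediate: placing one unit of output on each activity of a $p$-set family covering $t$ elements wins exactly the $t$ corresponding contests, so the best-response optimum is at least the coverage optimum.

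The crux — and the step I expect to be the main obstacle — is \emph{soundness}: the strategy space is continuous, so the player may try to win extra contests by \emph{spreading} output across several activities, a move with no analogue in the discrete problem (a fractional set cover can cover an element set that no $p$ integral sets cover). The estimate that controls this is an averaging/rounding argument: if $\bb$ is feasible and $W$ is the set of contests it wins, then summing $\sum_{\ell:\,i\in S_\ell}b_\ell\ge1$ over all $i\in W$ gives $\sum_\ell b_\ell\,|S_\ell\cap W|\ge|W|$, so since $\sum_\ell b_\ell\le p$ some single set covers at least $|W|/p$ elements of $W$; greedily iterating yields $p$ sets covering at least $(1-1/e)|W|$ elements. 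Hence the best-response optimum and the coverage optimum agree up to the factor $1-1/e$, and an $\alpha$-approximate best response can be rounded into an $\alpha(1-1/e)$-approximate cover. Since this exact soundness loss would cancel the $1-1/e$ coverage gap, to obtain a genuine NP-hardness statement I would first amplify the coverage gap — for instance by tensoring the partition systems underlying Feige's construction, whose fractional and integral covering numbers agree up to a $1+o(1)$ factor — so that the completeness value and the soundness bound end up on opposite sides of a fixed multiplicative gap; that gap is the claimed constant inapproximability factor. The residual technical points — verifying that the soundness argument survives the rescaling of the prizes used to reduce the $n>1$ game to $n=1$, and the analogous bookkeeping for the cost-model penalty — are routine.
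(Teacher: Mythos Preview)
Your reduction from Maximum Coverage is natural, and you correctly identify the real obstacle: a fractional output vector can win strictly more contests than any integral choice of $p$ sets, so soundness is the issue. However, the fix you propose does not close the gap. Your own rounding argument shows that the best-response optimum and the integral coverage optimum are within a factor $1-1/e$ of each other, and this loss exactly cancels the $1-1/e$ hardness gap of Maximum Coverage; since greedy \emph{achieves} $1-1/e$ on every instance, no amplification can push the Max-Coverage gap past that threshold, so ``tensoring the partition systems'' cannot help. Worse, in Feige's hard instances each element lies in many sets, so in the NO case the uniform fractional vector $b_\ell = p/m$ already meets every threshold and wins \emph{all} contests---the best-response optimum is the same in YES and NO instances, and the reduction collapses. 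Your appeal to ``fractional and integral covering numbers agreeing up to $1+o(1)$'' refers to the fractional \emph{set-cover} LP, not to the threshold-coverage objective you are actually analyzing; these behave very differently on Feige's instances. The cost-model sketch inherits the same problem, and the sentence about ``tuning $c_\ell$'' does not explain how a linear penalty prevents the player from spreading output.

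The paper sidesteps the rounding loss by starting from \emph{bounded-occurrence} source problems and engineering the instance so that rounding is \emph{lossless}. For the budget model it reduces from \textsc{Max-2-Sat-3}: two activities per variable (one per literal), one contest per clause, and---crucially---two extra unit-prize contests per literal whose sole role is to compensate, during an explicit rounding procedure, for the at most two clause-contests that can be lost when a fractional coordinate is snapped to $0$. For the cost model it reduces from $4$-regular \textsc{Set Cover}, adding a high-value contest per activity so that any optimal response is forced to be $\{0,1\}$-valued. In both cases the key idea you are missing is not amplification of the source gap but a gadget that eliminates the fractional advantage altogether.
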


We note that in a single-player case finding a best response is equivalent to finding a PNE. We obtain the following corollary.

\begin{corollary}
In the linear multi-activity model, both cost and budget,
the problem of computing an exact or an approximate PNE is {\em NP}-hard. 
% This holds both for the cost model and for the budget model.
\end{corollary}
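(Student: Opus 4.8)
The plan is to reduce from a known NP-hard problem to the single-player best-response problem, exploiting the fact that a player facing a collection of linear contest criteria and a single linear budget (or cost) constraint must decide which \emph{subset} of contests to ``cover'' simultaneously, and covering a subset may or may not be feasible depending on how the contests' weight vectors interact. A natural source is a covering/packing-flavoured problem such as \textsc{Max-Coverage}, \textsc{Densest-Subgraph}, or most directly an \textsc{Independent-Set}-type or \textsc{Set-Cover}-type problem; I would aim for one whose optimum is NP-hard to approximate within some fixed constant, so that the same reduction yields the inapproximability claim. The key modelling observation is: with a linear budget $\sum_\ell \beta_{i,\ell} b_{i,\ell}\le 1$ and linear criteria $\sum_\ell w_{j,\ell} b_{i,\ell}\ge 1$, the set of contests the player can \emph{simultaneously} win is governed by a linear feasibility system in the variables $b_{i,\ell}$, and by choosing the $w_{j,\ell}$ and $\beta_{i,\ell}$ appropriately one can encode arbitrary combinatorial ``conflict'' structure between contests.

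Concretely, I would proceed as follows. First, fix a gadget construction: create one activity per ``resource'' in the source instance and one contest per ``item''; set $w_{j,\ell}$ so that contest $j$ is won exactly when the player puts at least a threshold amount of output on the activities associated with item $j$, and set $\beta_{i,\ell}$ so the unit budget forces the player to pick only a bounded-weight combination. The values $V_j$ would all be set equal (say $V_j=1$) so that the player's utility is literally the number of contests won, making best-response equal to ``maximum number of simultaneously winnable contests.'' Second, I would verify the two directions of the reduction: (a) any solution to the source instance of value $t$ yields an output vector winning $\ge t$ contests, and (b) any output vector winning $\ge t$ contests can be rounded/decoded back to a source solution of value $\ge t$ (or $\ge \alpha t$ for some fixed $\alpha$, which still suffices for constant-factor inapproximability). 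Third, I would argue the cost-model version in parallel: replace the budget constraint by a linear cost $\sum_\ell c_{i,\ell} b_{i,\ell}$ and note the player's payoff becomes (number of contests won) minus (cost); by scaling $V_j$ and the cost coefficients one forces essentially the same combinatorial trade-off, so the same hardness and inapproximability transfer.

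The main obstacle I expect is controlling the \emph{continuity} of the LP: because $b_{i,\ell}$ ranges over $\bR_{\ge 0}^k$ rather than $\{0,1\}^k$, the player can ``split'' output fractionally across activities, so I must design the weights $w_{j,\ell}$ and the budget coefficients $\beta_{i,\ell}$ so that fractional splitting never helps — i.e., the relaxation is integral at the optimum, or at least any fractional optimum can be rounded without losing more than a constant factor. A clean way to do this is to make the criteria essentially disjoint in the activities they depend on (each contest reads off one ``private'' activity plus shared structure encoded through the budget), so that the only real decision is the binary ``invest enough in contest $j$'s activity or not,'' and the budget then enforces the combinatorial constraint. Once that integrality/rounding step is nailed down, the NP-hardness of exact best-response is immediate, the APX-hardness of the source problem gives the ``no PTAS unless P$=$NP'' strengthening, and the corollary about PNE/$\epsilon$-PNE computation follows because, as the excerpt already notes, in the single-player game a (approximate) best response \emph{is} a (approximate) PNE, so any polynomial-time algorithm computing the latter would compute the former.
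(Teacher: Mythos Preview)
Your final sentence is exactly right and matches the paper: the corollary is immediate from best-response hardness because in a one-player instance a (approximate) best response \emph{is} a (approximate) PNE. So the only content is the underlying hardness of best response, which is Theorem~\ref{thm:hardBestResp}; your proposal is really a sketch for that theorem.

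At the high level your plan matches the paper's: reduce from an APX-hard problem so that the player's utility equals the source objective (up to affine rescaling), then argue fractional outputs do not help. The paper uses \textsc{4-Regular-SetCover} for the cost model and \textsc{MAX-2-SAT-3} for the budget model, with activities corresponding to sets/literals and contests \emph{sharing} activities.

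The genuine gap is your proposed integrality fix. You suggest making the criteria ``essentially disjoint in the activities they depend on (each contest reads off one `private' activity plus shared structure encoded through the budget).'' But with one private activity per contest and a single linear budget $\sum_\ell \beta_\ell b_\ell\le 1$, the player's problem collapses to: choose a maximum-cardinality subset $S$ of contests with $\sum_{j\in S}\beta_j\le 1$. That is solved greedily in polynomial time, so the reduction would not be hard. The combinatorial difficulty has to come from contests \emph{sharing} activities; a single linear budget over disjoint activities cannot encode it. This also contradicts your own first paragraph, where you (correctly) set up one activity per resource and let each contest depend on \emph{several} activities.

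The paper keeps the overlapping activities and handles fractionality differently. In the cost reduction, it shows directly that any output strictly between $0$ and $1$ on an activity is dominated: either it wins nothing, or the cost of the fractional mass exceeds the total value of the contests it could possibly help win. In the budget reduction, it gives an explicit polynomial-time rounding procedure that takes any fractional output vector and produces a $\{0,1\}$ vector winning at least as many contests (using the bounded-occurrence property of \textsc{MAX-2-SAT-3} to pay for the few contests lost when zeroing a coordinate). Either of these is the missing ingredient you need; the ``make activities disjoint'' shortcut does not work.
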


% Theorem~\ref{thm:hardBestResp} proves that it is hard to compute an exact best response. The next question to consider is whether a player can efficiently compute an approximate solution that is arbitrarily close to the optimal, i.e., whether our problem admits a polynomial-time approximation scheme. In the next theorem, we rule out such a possibility for the budget model; we discuss the cost model subsequently.

Theorem~\ref{thm:hardBestResp} proves that the problem of finding a best response in the linear multi-activity model does not admit a polynomial-time approximation scheme. For restricted versions of the model, a constant factor approximation can be found. For example, for the budget model, if the contests have $\{0,1\}$ weights for the activities, and the player has a budget that she can distribute across any of the activities (the hard instance constructed in Theorem~\ref{thm:hardBestResp} for the budget model satisfies these conditions), then the problem becomes a submodular maximization problem with a polynomial-time constant-factor approximation algorithm. However, we feel that a 
constant-factor approximation result is of limited usefulness in the context of
computing a best response or a Nash equilibrium.

% For the multi-activity cost model, the problem of approximating a best response is less well-behaved, because a player's utility may become negative. If a player produces a $0$ output for all the activities, she has a utility of $0$. We can create instances where it is NP-hard to decide if there exists an action that gives strictly positive utility to the player, which shows that this problem is inapproximable. The proof proceeds by modifying the construction in the proof of Theorem~\ref{thm:hardBestResp}: we tweak the costs to make sure that the utility of the player is strictly positive if and only if she finds a set cover of size $\kappa$ (or smaller), which is NP-hard.

% As proved in the theorem above, best-response is NP-hard in multi-activity games. It is easy to observe that best response is easy in single-activity games: the player selects an activity among the $k$ activities, and then for the selected activity, the player selects the level of output (for which they essentially have to select among at most $m$ output levels, depending upon the minimum criteria to win the contests).

Next, we study the fixed-parameter tractability of the problem of computing a best response. There are three natural parameters of the model: the number of players $n$, the number of contests $m$, and the number of activities $k$. We have already shown that the problem is NP-hard even with only one player, $n=1$. On the positive side, we show that the problem becomes tractable if either the number of contests $m$ or the number of activities $k$ is a constant.

\begin{theorem}\label{thm:fptBestResponse}
In the linear multi-activity model, both cost and budget, a player can compute a best response in polynomial time if either the number of contests or the number of activities is bounded by a constant. 
\end{theorem}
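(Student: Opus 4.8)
The plan is first to reduce to the single-player case $n=1$, exactly as noted before Theorem~\ref{thm:hardBestResp}: when player $i$ best-responds, the number of \emph{other} winners of each contest $j$ is fixed, so $i$ faces the optimization problem of choosing $\bb=(b_\ell)_{\ell\in K}$ with $\bb\ge\bm{0}$ (and, in the budget model, $\sum_{\ell}\beta_\ell b_\ell\le 1$) so as to maximize $\sum_{j\in M} v_j\cdot\b1_{\{\sum_\ell w_{j,\ell}b_\ell\ge 1\}}$ (minus $\sum_\ell c_\ell b_\ell$ in the cost model), where $v_j\ge 0$, $\beta_\ell,c_\ell>0$, $w_{j,\ell}\ge 0$ are the (fixed) data of player $i$. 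I would handle the two regimes by different enumeration schemes, reducing each to polynomially many linear programs.

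\textbf{Bounded number of contests ($m=O(1)$).} There are only $2^m$ possible winning sets, so I would iterate over every target $S\subseteq M$ and solve one LP: in the budget model, test whether $\{\bb\ge\bm{0},\ \sum_\ell\beta_\ell b_\ell\le 1,\ \sum_\ell w_{j,\ell}b_\ell\ge 1\ \forall j\in S\}$ is nonempty; in the cost model, minimize $\sum_\ell c_\ell b_\ell$ over $\{\bb\ge\bm{0},\ \sum_\ell w_{j,\ell}b_\ell\ge 1\ \forall j\in S\}$. Among all points produced (including $\bm{0}$ for $S=\emptyset$) I would output the one of highest actual utility. Correctness: the point produced for $S$ wins a superset of $S$, so its utility is at least $\sum_{j\in S}v_j$ (budget), resp.\ $\sum_{j\in S}v_j-\text{cost}(S)$ (cost), and no such quantity exceeds the true optimum, since it is the value of a genuine strategy; conversely, taking $S$ to be the winning set $S^\star$ of an optimal $\bb^\star$, the point produced for $S^\star$ has utility at least $u_i(\bb^\star)$ (because $\bb^\star$ is feasible for that LP and, in the cost model, has cost $\ge\text{cost}(S^\star)$). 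Hence the best point found is optimal, and the whole computation is $2^m\cdot\mathrm{poly}$.

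\textbf{Bounded number of activities ($k=O(1)$).} Now $m$ is unrestricted but the strategy space is $k$-dimensional. Let $\bb^\star$ be optimal with winning set $S^\star$ and set $P_{S^\star}=\{\bb\ge\bm{0}:\sum_\ell w_{j,\ell}b_\ell\ge 1\ \forall j\in S^\star\}$, intersected with $\{\sum_\ell\beta_\ell b_\ell\le 1\}$ in the budget model. Every point of $P_{S^\star}$ wins at least the contests in $S^\star$; on $P_{S^\star}$ the objective is affine (constant in the budget model, that constant minus the linear cost in the cost model) and bounded above; and $P_{S^\star}\subseteq\bR_{\ge 0}^k$ is a pointed polyhedron. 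Hence the objective attains its maximum over $P_{S^\star}$ at a vertex $\hat{\bb}$, and $u_i(\hat{\bb})\ge u_i(\bb^\star)$, so $u_i(\hat{\bb})$ is optimal. A vertex of a polyhedron in $\bR^k$ is the unique solution of $k$ linearly independent tight defining inequalities, and every defining inequality of $P_{S^\star}$ lies among the at most $m+k+1$ hyperplanes $\{b_\ell=0\}_{\ell\in K}$, $\{\sum_\ell w_{j,\ell}b_\ell=1\}_{j\in M}$, and (budget model) $\{\sum_\ell\beta_\ell b_\ell=1\}$. Therefore I would enumerate all $\binom{m+k+1}{k}=O(m^k)$ size-$k$ subsets of these hyperplanes, solve the associated $k\times k$ linear system whenever it is nonsingular, discard solutions violating $\bb\ge\bm{0}$ or the budget constraint, and output the surviving point of highest utility; $\hat{\bb}$ is among the survivors, so the output is optimal, and for fixed $k$ this is polynomial in $m$.

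\textbf{Where the care is needed.} The linear algebra and LP solving are routine; the delicate points are: (i) a winning set can only \emph{exceed} a prescribed target $S$, never fall short, but this is harmless because the $v_j$ are nonnegative (and in the cost model overshooting adds no cost, as cost depends on $\bb$ alone) --- this is what makes the subset/vertex enumerations legitimate; and (ii) the relevant optima are attained. For (ii), in the budget model the feasible region is compact and $\bb\mapsto u_i(\bb)$ is upper semicontinuous, being a nonnegative combination of indicators of closed halfspaces, so a maximizer exists; in the cost model one may first restrict to the compact set $\{\bb\ge\bm{0}:\sum_\ell c_\ell b_\ell\le\sum_{j}v_j\}$ (outside it, utility is below $u_i(\bm{0})=0$) and argue as before, or simply invoke that a linear function bounded below over a pointed polyhedron attains its minimum at a vertex. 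Degenerate vertices are still captured, as such a vertex is the unique solution of some $k$ linearly independent tight inequalities among those active at it. With these observations both cases go through, giving the claimed polynomial-time algorithms.
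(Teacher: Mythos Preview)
Your proof is correct. For the constant-$m$ case you do exactly what the paper does: enumerate the $2^m$ target winning sets and solve one LP per set. For the constant-$k$ case you take a different but equally valid route: the paper invokes the VC-dimension of halfspaces in $\bR^k$ (Sauer--Shelah) to bound the number of realizable winning sets by $O(m^{k+1})$, enumerates those sets, and solves an LP for each; you instead argue that an optimal point can be taken to be a vertex of the polyhedron $P_{S^\star}$, and every such vertex is the intersection of $k$ of the $m+k+1$ defining hyperplanes, so enumerating $\binom{m+k+1}{k}=O(m^k)$ candidate points and evaluating the utility at each suffices. Your approach is more elementary---it avoids the VC-dimension machinery entirely and replaces LP solving by $k\times k$ linear systems---and in fact yields slightly fewer candidates. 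The paper's approach, on the other hand, makes the structural reason (polynomially many winning patterns) more transparent and would extend more readily to nonlinear contest criteria of bounded VC-dimension. Your care points (i) and (ii) are exactly the right subtleties, and your handling of them (nonnegativity of the $v_j$, upper semicontinuity / pointedness) is sound.
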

\section{Single-Activity Games: PPAD$\cap$PLS-Completeness }\label{sec:singleActivity}

In this section, we focus our attention on the single-activity model, and show that, for both cost and budget models, it is PLS-complete to find a pure Nash equilibrium and (PPAD$\cap$PLS)-complete to find a mixed Nash equilibrium (MNE). Note that the set of MNE of a game is a super-set of the set of PNE, so finding an MNE is at least as easy as finding a PNE.

In a single-activity game, for every player $i$ there is at most one activity $\ell$ for which the output $b_{i,\ell}$ may be strictly positive; for every other activity $\ell' \neq \ell$ it holds that $b_{i,\ell'} = 0$. Additionally, 
we assume that each contest $i$ has a fixed total prize, 
which it distributes equally among all winners.

\begin{theorem}\label{thm:ppadPls}
In the linear single-activity model, both cost and budget, it is {\sc (PPAD}\,$\cap$\,{\sc PLS)}-complete to find a mixed-strategy Nash equilibrium. 
\end{theorem}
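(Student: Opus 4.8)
The plan is to establish membership and hardness separately, with hardness the substantial part; the same construction will also give PLS-completeness of PNE.

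\emph{Membership in} PPAD$\cap$PLS. First I would observe that, after a polynomial-time preprocessing, a linear single-activity game is an explicit congestion game. In the single-activity model a pure strategy of player $i$ amounts to choosing one activity $\ell$ and an output level $b\ge 0$ with $\beta_{i,\ell}b\le 1$ (budget) or at cost $c_{i,\ell}b$ (cost); the set $\{j:w_{j,\ell}b\ge 1\}$ of contests won is monotone in $b$ and changes only at the $\le m$ breakpoints $b=1/w_{j,\ell}$, so each player has at most $km+1$ relevant pure strategies, each a subset (indeed a weight-ordered prefix) of the contests, plus the zero action. Viewing the contests as congestion resources with resource utility $v_j(n_j)=V_j/n_j$, the budget game is literally a congestion game, and the cost game is a congestion game with an extra player- and strategy-specific additive constant (the minimal cost realizing the chosen prefix); by \eqref{eq:potFun} it remains an exact potential game with a succinctly represented potential and poly-size, poly-searchable strategy spaces. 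Membership of MNE computation in PPAD$\cap$PLS then follows exactly as for explicit congestion games~\cite{babichenko2021settling}: PLS because one can find a PNE by local search on the potential and a PNE is an MNE, PPAD because it is a succinctly represented finite game.

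\emph{Hardness.} I would reduce from computing an MNE of an explicit congestion game, which is (PPAD$\cap$PLS)-hard~\cite{babichenko2021settling}. The difficulty is that our contests realize only the equal-sharing resource utility $\ell\mapsto V_j/\ell$, whereas a congestion game allows an arbitrary cost function $d_e(\ell)$. The key device is \emph{dummy players} that pad congestion. Step (i): preprocess the given congestion game so that all pure strategies of any player use the same number of resources (pad with private zero-cost dummy resources) and all $d_e$ are non-negative and non-decreasing; then, for a player's own decision, minimizing $\sum_{e\in s_i}d_e$ is equivalent to maximizing $\sum_{e\in s_i}(\Lambda-d_e)$ for a single global constant $\Lambda\ge\max_{e,\ell}d_e(\ell)$, since the $\sum_{e\in s_i}\Lambda$ term is a constant of player $i$'s decision. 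Step (ii): for each resource $e$ introduce a group of equal-sharing contests $j_{e,1},\dots,j_{e,r}$ with fixed prizes $V_{e,s}\ge 0$ and, for each $s$, a bank of $t_{e,s}$ dummy players, each owning a private activity valued only by $j_{e,s}$, so that winning $j_{e,s}$ is strictly dominant for them; hence in every (mixed or pure) equilibrium exactly $t_{e,s}$ dummies win $j_{e,s}$, and a real player winning the whole group of $e$ gets $\sum_s V_{e,s}/(t_{e,s}+\ell)$ when $\ell$ real players win it. A Cauchy-matrix / Hausdorff-moment argument shows that, by choosing the $t_{e,s}$ and $V_{e,s}\ge 0$, this sum can be made to equal $\Lambda-d_e(\ell)$ for all relevant $\ell$ whenever the sequence $\bigl(\Lambda-d_e(\ell)\bigr)_\ell$ lies in the cone generated by $\{(1/(t+\ell))_\ell:t\ge 0\}$, i.e.\ is a finite completely monotone sequence; this forces $d_e$ to be, in addition to non-decreasing, concave with the appropriate higher-order difference signs (affine $d_e$ being the clean case).

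To finish, Step (iii): for each original player $i$ and each of its pure strategies $s$, create one activity (portfolio) $\ell_{i,s}$ and choose the linear weights $w_{\cdot,\ell_{i,s}}$ and the coefficient $\beta_{i,\ell_{i,s}}$ (or $c_{i,\ell_{i,s}}$) so that the uniquely optimal output along $\ell_{i,s}$ wins exactly the group contests of the resources in $s$; a scaling argument—e.g.\ attaching a tiny private always-won prize to each portfolio—makes every proper sub-prefix and the zero action strictly worse, so the set of relevant pure strategies of real player $i$ is exactly $\{\,\ell_{i,s}\,\}$. The resulting linear single-activity game (budget or cost) then has, after projecting out the forced pure play of the dummies, MNE in bijection with the MNE of the original congestion game, since each real player faces the original payoffs up to an equilibrium-preserving additive constant. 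Run from the PLS-complete problem of finding a PNE of an explicit congestion game~\cite{fabrikant2004complexity}, the identical construction yields PLS-hardness of PNE (dummies remain dominant, hence pure, in any PNE), while PLS-membership is immediate from \eqref{eq:potFun}, the poly-size relevant strategy sets, and poly-time best response in single-activity games.

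\emph{Main obstacle.} The crux is steps (i)--(ii): pinning down a subclass of explicit congestion games that remains (PPAD$\cap$PLS)-hard and whose cost functions, after shifting by a constant, are completely monotone—hence exactly simulable by non-negative fixed prizes over dummy-padded equal-sharing contests—while keeping the number and magnitude of dummy players, and all contest criteria and budgets/costs, linear and of polynomial size. If the hard instances of~\cite{babichenko2021settling} do not already have such benign cost functions, this step must be supplied either by a further structure-preserving reduction among congestion games, or by adapting the internals of the construction of~\cite{babichenko2021settling} (or of the gradient-descent hardness of~\cite{fearnley2021complexity}) directly to equal-sharing gadgets.
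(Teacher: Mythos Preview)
Your membership argument is fine, but the hardness direction has a real gap that you yourself flag as ``the main obstacle'' and then leave unresolved. Simulating an arbitrary resource cost $d_e(\ell)$ by a non-negative combination of equal-sharing terms $V_{e,s}/(t_{e,s}+\ell)$ is, as you note, only possible when the shifted sequence $(\Lambda-d_e(\ell))_\ell$ is completely monotone; there is no reason to expect the hard congestion-game instances of Babichenko--Rubinstein to have this property, and you do not actually supply the ``further structure-preserving reduction'' or the ``adaptation of the internals'' that you say would be needed. As written, the proposal does not establish hardness.

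The paper sidesteps the whole issue by going one step further back in the reduction chain and reducing directly from {\sc 5-Polytensor} (which Babichenko--Rubinstein show is already (PPAD$\cap$PLS)-hard) rather than from explicit congestion games. The point is that in a $\kappa$-polytensor game each sub-utility $u_S$ involves exactly $\kappa=5$ players, so any contest introduced to simulate $u_S$ will be shared by at most five winners; one only ever needs the sharing factors $1,\frac12,\frac13,\frac14,\frac15$, not the whole tail $(1/\ell)_{\ell\ge 1}$. For each $S$ of size $5$ one then creates a family of ``Type-$\ell$'' contests for $\ell=1,\dots,5$ whose prizes are set by an inclusion--exclusion computation so that the unwanted cross-terms (the prizes a player picks up when not all of $S$ play the designated actions) cancel exactly, leaving each player in $S$ with net utility $\frac{1}{5}v_S(\ba_S^*)$; setting $v_S=5u_S$ recovers the polytensor payoff. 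No dummy players, no moment problem, and all prizes are explicit polynomial-size expressions in the $u_S$ values. Your dummy-player route is aiming at a strictly stronger simulation (arbitrary congestion cost functions) than is actually required; the observation that a bounded number of sharing levels suffices, together with the explicit cancellation, is the missing idea.
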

A recent paper by Fearnley et al.~\cite{fearnley2021complexity} proved the interesting result that finding a fixed-point of a $2$-dimensional smooth function ($f : [0,1]^2 \rightarrow \bR$) given by a circuit of addition and multiplication operators using gradient descent, {\sc 2D-GD-FixedPoint}, is complete for the class PPAD$\cap$PLS. Based on this result, Babichenko and Rubinstein~\cite{babichenko2021settling} proved that the problem of computing an MNE of an explicit congestion game, {\sc ExpCong}, is also complete for PPAD$\cap$PLS. Babichenko and Rubinstein~\cite{babichenko2021settling} do this by first reducing {\sc 2D-GD-FixedPoint} to identical interest 5-polytensor games, {\sc 5-Polytensor}, and then {\sc 5-Polytensor} to {\sc ExpCong}. In our proof for Theorem~\ref{thm:ppadPls}, we reduce {\sc 5-Polytensor} to the single-activity game, proving the required result.

Before moving to the proof, let us define identical interest polytensor games, {\sc $\kappa$-Polytensor}. Polytensor games are a generalization of the better known polymatrix games, {\sc Polymatrix}; specifically, {\sc Polymatrix} = {\sc $2$-Polytensor}.
\begin{definition}[{\sc Polymatrix}: Identical Interest Polymatrix Game]
There are $n$ players. Player $i$ chooses from a finite set of actions $A_i$. The utility of player $i$ for action profile $(a_i, \ba_{-i}) = \ba = (a_j)_{j \in [n]} \in \times_{j \in [n]} A_j$ is given by $u_i(\ba) = \sum_{j \in [n], j \neq i} u_{i,j}(a_i, a_j)$, where $u_{i,j} : A_i \times A_j \rightarrow \bR_{\ge 0}$. The players have identical interest, i.e., $u_{i,j} = u_{j,i}$.
\end{definition}
The definition of {\sc $\kappa$-Polytensor} games is similar to that of {\sc Polymatrix} games: instead of $u_{i,j}$ we have $u_S$, where $S \subseteq [n], |S| = \kappa$.
\begin{definition}[{\sc $\kappa$-Polytensor}: Identical Interest $\kappa$-Polytensor Game]\label{def:polytensor}
There are $n$ players. Player $i$ chooses from a finite set of actions $A_i$. 
The utility of player $i$ for action profile $\ba = (a_j)_{j \in [n]} \in \times_{j \in [n]} A_j$ is given by $u_i(\ba) = \sum_{S \subseteq [n], i \in S, |S| = \kappa} u_S(\ba_S)$, where $\ba_S = (a_j)_{j \in S}$ is the action profile of the players in $S$ and $u_S : \times_{j \in S} A_j \rightarrow \bR_{\ge 0}$.
\end{definition}
When the number of actions for each player, $|A_i|$, is bounded by $m$, note that the representation size of a {\sc $\kappa$-Polytensor} game is $O(n^{\kappa}m^{\kappa})$. In particular, if $\kappa$ is a constant and
$m = poly(n)$, then {\sc $\kappa$-Polytensor} games admit a succinct representation.

\paragraph{Proof Sketch for Theorem~\ref{thm:ppadPls}.}
Below, we provide a reduction from {\sc $3$-Polytensor} to the budget game for a cleaner presentation of the main steps (unlike {\sc $5$-Polytensor}, {\sc $3$-Polytensor} is not known to be (PPAD$\cap$PLS)-complete). Similar steps with more calculations apply to {\sc $5$-Polytensor}, for both cost and budget models (we provide this argument in the full version of the paper).

Take an arbitrary instance of {\sc $3$-Polytensor} with $n$ players; we shall use the same notation as in Definition~\ref{def:polytensor}. We construct a single-activity game with $n$ players, $\sum_{i \in [n]} |A_i|$ activities, and a polynomial number of contests to be defined later.

The $\sum_{i \in [n]} |A_i|$ activities have a one-to-one association with the actions of the players. The activities are partitioned into $n$ subsets, so that the $i$-th subset has size $|A_i|$ and is associated with player $i$; we identify these activities with the set $A_i$. Player $i$ has a budget of $1$ that they can use to produce output along any activity from $A_i$, but they have $0$ budget for the activities in $A_j$ for $j \neq i$. Effectively, as we are in a single-activity model, player $i$ selects an activity from the activities in $A_i$ and produces an output of $1$ along it. Note that the players have disjoint sets of activities for which they can produce outputs.

All the contests we construct are associated with exactly three players and at most three activities. We shall denote a contest by $\cC_{i,j,k}(A)$, where (i) $S = \{i,j,k\}$ are the three distinct players whose utility function in the polytensor game, $u_S$, will be used to specify the prize of contest $\cC_S(A)$;
%EE I rephrased a somewhat convoluted sentence here, hopefully without changing the meaning 
(ii) the contest $\cC_S(A)$ awards its prize to any player who produces an output of at least $1$ along the activities in $A$; (iii) the activities in $A$ are from $A_i \cup A_j \cup A_k$ with $|A| \le 3$ and $ |A \cap A_{\ell}| \le 1$ for $\ell \in S$.
We shall call a contest $\cC_S(A)$ a Type-$\ell$ contest if $|A| = \ell$.

Let us focus on a fixed set of three players $S = \{i,j,k\}$. We create contests to exactly replicate the utility that these players get from $u_S$. If we can do this, then, by repeating the same process for every triple of players, we will replicate the entire {\sc $3$-Polytensor} game. The utility that player $i$ gets from $u_S$ is $u_S(a_i,a_j,a_k)$, where $a_i$, $a_j$, and $a_k$ are the actions of the three players.
We have the following contests:

% \begin{enumerate}
    % \item 
\textbf{Type-3 Contests.} Let us add a contest $\cC_S(a_i,a_j,a_k)$ with prize $v_S(a_i,a_j,a_k)$ for every $(a_i,a_j,a_k) \in A_i \times A_j \times A_k$. Later, we shall specify the $v_S(a_i,a_j,a_k)$ values based on $u_S(a_i,a_j,a_k)$ values. Contest $\cC_S(a_i,a_j,a_k)$ distributes a prize of $v_S(a_i,a_j,a_k)$ to players who produce output along the activities $a_i$, $a_j$, or $a_k$. 
    
Say, the players $i,j,k$ select the actions $a_i^*, a_j^*, a_k^*$. 
% Let $A_j^- = A_j \setminus \{a_j^*\}$; and for $T \subseteq S$, let $A_T^- = \times_{j \in T} A_j^-$. 
The total prize that player $i$ gets from the contests we added is:
\begin{align}\label{eq:ppadPls:1}
    &\frac{v_S(a_i^*,a_j^*,a_k^*)}{3}
    + \sum_{a_j \neq a_j^*} \frac{v_S(a_i^*,a_j,a_k^*)}{2}  + \sum_{a_k \neq a_k^*} \frac{v_S(a_i^*,a_j^*,a_k)}{2}
    + \sum_{\substack{a_j \neq a_j^* \\ a_k \neq a_k^*}} v_S(a_i^*,a_j,a_k)
\end{align}
In expression \eqref{eq:ppadPls:1}, the first term is for the prize that $i$ shares with $j$ and $k$, the second term is for the prizes that $i$ shares with $k$, but not with $j$, the third term is for the prizes that $i$ shares with $j$, but not with $k$, and the fourth term is for the prizes that $i$ does not share with $j$ or $k$. 

In expression \eqref{eq:ppadPls:1}, the first term $\frac{1}{3}{v_S(a_i^*,a_j^*,a_k^*)}$ resembles the utility  that the players obtain in the polytensor game, $u_S(a_i^*,a_j^*,a_k^*)$. If we were to set $v_S(a_i^*,a_j^*,a_k^*) = 3 u_S(a_i^*,a_j^*,a_k^*)$, then it would be exactly equal to it. However, we also need to take care of the additional terms in expression \eqref{eq:ppadPls:1}. Hence, we will add Type-1 and Type-2 contests to cancel these terms. 

The expression in \eqref{eq:ppadPls:1} can be rewritten as\\
% \begin{align*}
    $\sum_{\substack{a_j \in A_j \\ a_k \in A_k}} v_S(a_i^*,a_j,a_k) - \sum_{a_j \neq a_j^*} \frac{v_S(a_i^*,a_j,a_k^*)}{2} - \sum_{a_k \neq a_k^*} \frac{v_S(a_i^*,a_j^*,a_k)}{2}
    -\frac{2v_S(a_i^*,a_j^*,a_k^*)}{3}$.
% \end{align*}
\\

\noindent\textbf{Type-1 Contests.} Let us add a contest $\cC_S(a_i')$ with prize
$$ 
\sum_{a_i \neq a_i', a_j \in A_j, a_k \in A_k} v_S(a_i,a_j,a_k)
$$ 
for every $a_i' \in A_i$. This contest $\cC_S(a_i')$ awards its prize to any player who produces output along activity $a_i'$ (effectively, it awards the prize to player $i$ if they produce output along $a_i'$, because no other player can produce output along $a_i'$). Similarly, we add the contests $\cC_S(a_j')$ and $\cC_S(a_k')$ for $a_j' \in A_j$ and $a_k' \in A_k$, respectively. The total prize that player $i$ gets from Type-1 and Type-3 contests is
$\sum_{a_i,a_j,a_k} v_S(a_i,a_j,a_k) - \sum_{a_j \neq a_j^*} \frac{1}{2}{v_S(a_i^*,a_j,a_k^*)} - \sum_{a_k \neq a_k^*} \frac{1}{2}{v_S(a_i^*,a_j^*,a_k)}
    -\frac{2}{3}{v_S(a_i^*,a_j^*,a_k^*)}$.
As $\sum_{a_i,a_j,a_k} v_S(a_i,a_j,a_k)$ does not depend upon the action $a_i^*$ selected by $i$, the utility of player $i$ is effectively
\begin{equation*}
    - \sum_{a_j \neq a_j^*} \frac{1}{2}{v_S(a_i^*,a_j,a_k^*)} - \sum_{a_k \neq a_k^*} \frac{1}{2}{v_S(a_i^*,a_j^*,a_k)} -\frac{2}{3}{v_S(a_i^*,a_j^*,a_k^*)}.
\end{equation*}

\noindent\textbf{Type-2 Contests.} Let us add a contest $\cC_S(a_i', a_j')$ with prize
$$
\sum_{a_k \in A_k} \frac{1}{2}{v_S(a_i',a_j',a_k)}
$$
for every $a_i' \in A_i$ and $a_j' \in A_j$. This contest $\cC_S(a_i', a_j')$ awards its prize to players who produce output along activity $a_i'$ or $a_j'$. In a similar manner, we add contests corresponding to the actions of the other $5$ possible combinations of players among the three players $i,j,k$, e.g., $\cC_S(a_i', a_k')$ for $a_i' \in A_i$ and $a_k' \in A_k$, and so on. The net utility that player $i$ gets from Type-1, Type-2 and Type-3 contests is
    $\frac{1}{3}{v_S(a_i^*,a_j^*,a_k^*)}$.
We set $v_S(a_i,a_j,a_k) = 3 u_S(a_i,a_j,a_k)$ for every $(a_i, a_j, a_k) \in A_i \times A_j \times A_k$, and we are done.
\qed

In Theorem~\ref{thm:ppadPls}, we analyzed the complexity of computing an MNE. As we have shown earlier, single-activity cost and budget games always have a PNE, and therefore, it is relevant to know the complexity of computing a PNE. 
% Note that computing a PNE is at least as hard as computing an MNE because a PNE is also an MNE. 
In the next theorem, we prove that computing a PNE is PLS-complete, and this is true even if all the players are identical. In the proof, we reduce {\sc Max-Cut} to the problem of finding a PNE in a particular class of single-activity games with identical players. For this class of single-activity games, finding an MNE is easy, which highlights that the class of single-activity games where PNE is hard (PLS-complete) to compute is strictly larger than the class of single-activity games where MNE is hard (PPAD$\cap$PLS-complete) to compute. % The proof is provided in the appendix.
\begin{theorem}\label{thm:plsCompleteness}
In the single-activity models, both cost and budget, it is {\em PLS}-complete to find a pure-strategy Nash equilibrium. The result holds even if all players are identical.
\end{theorem}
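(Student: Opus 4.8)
The plan has two parts: membership in \textsc{PLS}, which is the easy direction, and \textsc{PLS}-hardness via a reduction from local \textsc{Max-Cut}. For \emph{membership} I would use that single-activity games are exact potential games (Theorem~\ref{thm:multPNE}) with potential $\phi^B$ (resp.\ $\phi^C$) from~\eqref{eq:potFun}; with the fixed-prize assumption, $v_j(\ell)=V_j/\ell$, so $\phi^B(\bb)=\sum_{j\in M}V_j\sum_{r=1}^{n_j(\bb)}1/r$, which is rational and polynomial-time computable. The key point is that in a single-activity game each player has only polynomially many ``canonical'' pure strategies worth considering: pick one activity $\ell$ and an output level equal to one of the at most $m$ criterion thresholds $1/w_{j,\ell}$ along $\ell$ (or the largest feasible output), plus the zero strategy; every other strategy is payoff-dominated by a canonical one. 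Taking canonical profiles as \textsc{PLS} solutions, $\phi$ as the objective, and single-player re-selection as the neighborhood, the neighborhood is searchable in polynomial time, and since the game is an exact potential game, local optima coincide with PNE. This places the problem in \textsc{PLS} (recovering the inclusion of Bilo et al.~\cite{bilo2019project}).

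For \emph{hardness} I would reduce from local \textsc{Max-Cut} under the single-vertex flip neighborhood, which is \textsc{PLS}-complete~\cite{schaffer1991simple}. Given a graph $G=(V,E)$ with positive integer weights $w_{uv}$ and $W=\sum_{(u,v)\in E}w_{uv}$, construct a single-activity budget game with $n=|V|$ \emph{identical} players and $2|V|$ activities, grouped into pairs $P_v=\{\ell^0_v,\ell^1_v\}$ indexed by vertices: every player has unit budget (and unit per-unit cost, in the cost version), so it produces one unit along a single activity of its choice, and all contest criteria are linear with threshold $1$. The contests are (i) a \emph{location contest} $\cC_v$ for each $v\in V$ with total prize $\Gamma:=2W+1$, won by any player producing along $\ell^0_v$ or $\ell^1_v$; and (ii) an \emph{edge contest} $\cC^s_{uv}$ for each edge $(u,v)\in E$ and each side $s\in\{0,1\}$ with total prize $w_{uv}$, won by any player producing along $\ell^s_u$ or $\ell^s_v$. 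The intended correspondence: a bijection profile assigning one player to each pair $P_v$ on side $s_v$ encodes the cut $S=\{v:s_v=1\}$.

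The verification I expect to carry out is: (1) every PNE participates and is a bijection profile --- playing $\bm{0}$ is strictly dominated, and because $\Gamma>2W$ a player sharing a location contest with another player always gains by moving to an empty pair, so with $|V|$ players and $|V|$ pairs equilibria place exactly one player per pair, after which the only deviations are ``flips'' of a single $s_v$; (2) for a bijection profile encoding $S$, using $H_0=0$, $H_1=1$, $H_2=\tfrac32$, the potential equals $\phi^B=n\Gamma+\tfrac32 W+\tfrac12\,\mathrm{cut}_G(S)$, and flipping one vertex changes $\phi^B$ by exactly half the change in cut weight; hence (since a player's own utility change equals the potential change) a bijection profile is a PNE iff no single-vertex flip improves the cut, i.e.\ iff $S$ is \textsc{flip}-locally-optimal; (3) the solution map sends a PNE to its induced cut, all parameters are polynomially bounded, and the cost version is identical since the total production cost is the constant $n$ at every bijection profile and $\bm{0}$ remains dominated.

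The main obstacle is step (1): with genuinely identical players and with contest criteria that depend only on outputs (never on player identities), one must still force every equilibrium to occupy the $|V|$ vertex-pairs bijectively, so that the game's equilibria line up exactly with the \textsc{flip}-local-optima of \textsc{Max-Cut}. The rest --- fixing $\Gamma$, and checking that the harmonic-number potential contributes the clean $\tfrac12$ factor per cut edge --- is routine arithmetic.
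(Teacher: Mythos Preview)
Your proposal is correct and follows essentially the same approach as the paper: the hardness reduction from local \textsc{Max-Cut} via $|V|$ identical players, $2|V|$ activities in vertex-indexed pairs, high-value ``location'' (vertex) contests forcing a bijection in any PNE, and two edge contests per edge so that the potential on bijection profiles equals a constant plus $\tfrac12$ times the cut weight, is exactly the paper's construction (the paper uses $\Gamma=4W+2$ and per-unit cost $4W+2$ rather than your $\Gamma=2W+1$ and unit cost, but your tighter constants also suffice since a single player collects at most $W$ from edge contests). Your explicit \textsc{PLS}-membership argument via canonical strategies is a useful addition that the paper does not spell out.
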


A direct corollary of this PLS-completeness result is that better/best-response dynamics takes an exponential number of steps to converge for some instances of the problem~\cite{johnson1988easy}.

Regarding fixed-parameter tractability (FPT), in the full version of the paper we show that a PNE is efficiently computable for both cost and budget single-activity models if the number of players is a constant, and for the budget model if the number of contests is a constant. Providing FPT results for other cases remains open.

% A corollary of the PLS-completeness result is that for some instances of the game, better/best-response dynamics takes an exponential number of steps to converge to a PNE. This leads to a natural question: if we assume that a player only makes a move if it increases their utility by at least $\epsilon$, for some constant $\epsilon > 0$, does this process converge to an $\epsilon$-PNE in steps polynomial in $1/\epsilon$ and size of the problem? The answer to this question is negative for exact potential games in general~\cite{skopalik2008inapproximability}. Note that all PLS-complete problems have fully polynomial-time approximation schemes (FPTAS) with respect to the potential function~\cite{orlin2004approximate}, but an approximation of the potential doesn't imply an approximation of each player's utility, there may be a player who can make a move to significantly increase their utility as shown by~\cite{skopalik2008inapproximability}.
\section{Conclusion and Future Work}
In this paper, we studied a model of simultaneous contests and analyzed the existence, efficiency, and computational complexity of equilibria in these contests. %There are some open problems related to the fixed-parameter tractability in the single-activity model. 
Given the real-life relevance of the three-level model, player--activity--contest, it will be interesting to study it for prize allocation rules other than the equal-sharing allocation, such as rank-based allocation, proportional allocation, etc. For these contests, one may investigate the properties of the equilibria and their computational complexity. We also believe that there is much to explore regarding the computational complexity of simultaneous contests, in general.

\section*{Acknowledgements}
We thank the anonymous reviewers for their valuable feedback. The second author is supported by Clarendon Fund and SKP Scholarship.

\bibliography{ref}

\cleardoublepage 
\appendix
% \section*{Simultaneous Contests with Equal Sharing Allocation of Prizes:
% Computational Complexity and Price of Anarchy \\ (Paper 5337) \\ Technical Appendix}
% \section{Additional Preliminaries}
% \begin{lemma}\label{thm:prelim:nonDecreasing}
% Let $f: \bR_{\ge 0} \longrightarrow \bR_{\ge 0}$ be a continuous and differentiable function, and let $g(x) = x \cdot f(x)$. If (i) $g'(x) \ge 0$ and (ii) $g''(x) \le 0$, then $f'(x) \le 0$.
% \end{lemma}
% \begin{proof}
% For $x > 0$, $f(x) = \frac{g(x)}{x}  \implies f'(x) = \frac{x \cdot g'(x) - g(x)}{x^2}$. Let $h(x) = x \cdot g'(x) - g(x)$. Note that $h(0) = 0$. Now, $h'(x) = g''(x) \le 0$. So, $h(x) \le 0$ for all $x \ge 0$, which implies that $f'(x) = \frac{h(x)}{x^2} \le 0$ for all $x > 0$.
% \qed\end{proof}

\section{Omitted Proofs and Additional Results}
\subsection{From Section~\ref{sec:modelpnepoa}}
\begin{proof}[Theorem~\ref{thm:multPNE}]
We shall prove the result for the cost game. The budget game can be reduced to the cost game by endowing it with cost function $c(\cdot)$ that is always equal to $0$ (the presence of the budget constraint does not affect the proof).

Let player $i$ make a move by changing its strategy from $\bb_i$ to $\bb_i'$, and let $\bb_{-i}$ be the pure strategy profile of the other players. Let $S \subseteq M$ and $S' \subseteq M$ be the contests that player $i$ wins for strategy profiles $\bb = (\bb_i, \bb_{-i})$ and $(\bb_i', \bb_{-i})$, respectively. 
We have 
$$
n_j(\bb_i', \bb_{-i}) = 
\begin{cases}
n_j(\bb) - 1 \quad\text{ if $j \in S \setminus S'$}\\
n_j(\bb) + 1 \quad\text{ if $j \in S' \setminus S$}\\
n_j(\bb) \quad\text{ in all other cases}.
\end{cases}
$$
The change in the potential $\phi(\bb_i', \bb_{-i}) - \phi(\bb)$ can be expressed as
\begin{align*}
    &\  \sum_{j \in M} \left( \sum_{\ell \in [n_j(\bb_i', \bb_{-i})]} v_j(\ell) - \sum_{\ell \in [n_j(\bb)]} v_j(\ell) \right) - \left(  c_{i}(\bb_i') - c_i(\bb_i)  \right) \\
    & \qquad \qquad \qquad \qquad \qquad \qquad \qquad \qquad \qquad + \sum_{q \in N \setminus \{i\}}  \underbrace{\left(  c_{q}(\bb_{q}) -  c_{q}(\bb_{q}) \right)}_{= 0}, \\
    &= \underbrace{\left(\sum_{j \in S' \setminus S} v_j(n_j(\bb_i', \bb_{-i})) -  c_i(\bb'_i)\right)}_{u_i(\bb_i', \bb_{-i})}  - \underbrace{\left( \sum_{j \in S \setminus S'}v_j(n_j(\bb)) - c_i(\bb_i) \right)}_{u_i(\bb)}\ ,
\end{align*}
which is exactly equal to the change in the utility of the player $i$.
\qed\end{proof}

\begin{proof}[Corollary~\ref{thm:pseudoPolyBR}]
%Both for the cost model and for the budget model, 
The potential function is bounded from above by $\phi(\bb) \le \sum_{j \in M} \sum_{\ell \in [n_j(\bb)]} v_j(\ell) \le \sum_{j \in M} \sum_{\ell \in [n_j(\bb)]} v_j(1) \le \sum_{j \in M} n \cdot v_j(1) \le n \cdot m \cdot (\max_{j} v_j(1))$. As the game is an exact potential game, every $\epsilon$-better-response move increases the potential by more than $\epsilon$. So, in $n \cdot m \cdot (\max_{j} v_j(1)) / \epsilon$ steps the sequence should converge to an $\epsilon$-PNE.
\qed\end{proof}

\begin{proof}[Lemma~\ref{thm:PoA:1}]
Let $S, T \subseteq \cG$, $S \subseteq T$, and $x \in \cG \setminus T$. We need to show that 
\[
    \sw(S \cup \{x\}) - \sw(S) \ge \sw(T \cup \{x\}) - \sw(T).
\]
Let $x = g^i_j$ for some player $i$ and contest $j$. As $S \subseteq T$, $n_j(S) \le n_j(T)$. Now, as the total prize allocated by contest $j$ (which is equal to $\ell \cdot v_j(\ell)$ if there are $\ell$ winners) is a concave function, we get
\begin{multline*}
    (n_j(S) + 1) \cdot v_j(n_j(S) + 1) - n_j(S) \cdot v_j(n_j(S)) \\
    \ge (n_j(T) + 1) \cdot v_j(n_j(T) + 1) - n_j(T) \cdot v_j(n_j(T)).
\end{multline*}
As $n_{j}(S \cup \{x\}) = n_{j}(S) + 1$ and $n_{j}(T \cup \{x\}) = n_{j}(T) + 1$, and $n_{j'}(S \cup \{x\}) = n_{j'}(S)$ and $n_{j'}(T \cup \{x\}) = n_{j'}(T)$ for $j' \neq j$, we get the required result.
\qed\end{proof}

\begin{proof}[Lemma~\ref{thm:PoA:2}]
By definition of $u_i$,
\begin{equation}\label{eq:prf:PoA:1}
    u_i(A_i, A_{-i}) = \sum_{j : g^i_j \in A_i} v_j( n_j(A_i, A_{-i}) ) = \sum_{j : g^i_j \in A_i} v_j( n_j(\emptyset, A_{-i}) + 1 ).
\end{equation}
By changing her action from $\emptyset$ to $A_i$, player $i$ starts winning contests in $J = \{j \mid g^i_j \in A_i\}$. Therefore, she affects the prize allocation of players who were already receiving prizes from contests in $J$. In particular, for any $j \in J$, the $n_j(\emptyset, A_{-i})$ players who were receiving a prize of $v_j( n_j(\emptyset, A_{-i}) )$ now receive a prize of $v_j( n_j(\emptyset, A_{-i}) + 1 )$. So,
\begin{multline}\label{eq:prf:PoA:2}
    \sum_{\iota \neq i} u_{\iota}(\emptyset, A_{-i}) - \sum_{\iota \neq i} u_{\iota}(A_i, A_{-i}) \\
    = \sum_{j \mid g^i_j \in A_i} n_j(\emptyset, A_{-i}) \cdot (v_j( n_j(\emptyset, A_{-i}) ) - v_j( n_j(\emptyset, A_{-i}) + 1 )).
\end{multline}
As the total prize allocated by a contest is a non-decreasing function of the number of winners, we have
\begin{align*}%\label{eq:prf:PoA:3}
     &(n_j(\emptyset, A_{-i}) + 1) \cdot v_j( n_j(\emptyset, A_{-i}) + 1 ) \ge n_j(\emptyset, A_{-i}) \cdot v_j( n_j(\emptyset, A_{-i}) ) \\
     &\implies v_j( n_j(\emptyset, A_{-i}) + 1 ) \ge n_j(\emptyset, A_{-i}) \cdot (v_j( n_j(\emptyset, A_{-i}) ) - v_j( n_j(\emptyset, A_{-i}) + 1 )) \\
     &\implies \sum_{j : g^i_j \in A_i} v_j( n_j(\emptyset, A_{-i}) + 1 ) \\
     &\qquad \qquad \qquad \qquad \ge \sum_{j : g^i_j \in A_i} n_j(\emptyset, A_{-i}) \cdot (v_j( n_j(\emptyset, A_{-i}) ) - v_j( n_j(\emptyset, A_{-i}) + 1 )).
\end{align*}
Combining this inequality with \eqref{eq:prf:PoA:1} and \eqref{eq:prf:PoA:2}, we get the required result.
\qed\end{proof}

\begin{proof}[Theorem~\ref{thm:swBudMul}]
The social welfare is a non-decreasing function, as the total prize allocated by a contest weakly increases as the number of winners increases. This observation, together with Lemmas~\ref{thm:PoA:1} and~\ref{thm:PoA:2}, shows that our setting satisfies the necessary conditions of Theorem~5 of
Vetta~\cite{vetta2002nash}. This gives us the required bound on the price of anarchy.
\qed\end{proof}

\begin{proof}[Theorem~\ref{thm:swBudMulNeg}]
We describe an instance of the game, adapted from Example 5.37 of Vojnovic~\cite{vojnovic2015contest}, that proves the required result.

Let there be $n$ players, $n$ activities, and $n$ contests, i.e., $m = k = n$. Each player has a budget of $1$ that he can spend on any of the activities. Contest $1$ has a prize of $n+1$ and the remaining $n-1$ contests have a prize of $1$ each. Contest $i$ gives (a share of) its prize to any player whose output is at least $1$ for activity $i$.

This game has a unique equilibrium where every player produces an output of $1$ for activity $1$. This happens because even if all players play action $1$, each player gets a share of the prize from contest $1$, which is equal to $(n+1)/n > 1$. On the other hand, by spending their budget on any other activity they can get a prize of at most $1$.

Now, the social welfare for this equilibrium is $n+1$. On the other hand, if each player selects a different activity then the social welfare would be $2n$. So, we get a price of stability equal to $\frac{2n}{n+1} \rightarrow 2$ as $n \rightarrow \infty$.
\qed\end{proof}

\begin{proof}[Theorem~\ref{thm:swCost}]
We analyze the two variants of the model separately.

\smallskip

\noindent\textbf{Subtracting Costs from Social Welfare.}
We will construct an instance with two players, two activities, and two contests, i.e., $n=k=m=2$. Contest $j$ distributes a prize of $2+\epsilon$ to players who produce an output of $1$ along activity $j$ and does not care about the output along activity $(3-j)$. Each player has a linear cost function with a slope of $1$ for each activity.

This game has a unique equilibrium where both players produce an output of $1$ for both activities. Because, irrespective of what the other player is doing, a player would get a prize of strictly greater than $1$ by producing an output of $1$ (with cost $1$) for each activity. The total welfare for this equilibrium is $2*(2+\epsilon) - 2*(1+1) = 2\epsilon$.

% Consider the following PNE: player $i$ is producing an output of $1$ along activity $(3-i)$ and an output of $0$ along activity $i$; therefore, each player has a cost of $(1-\epsilon)$, a prize of $1$, and a utility of $\epsilon$. Notice that this is a PNE because if any player $i$ decides to produce output along activity $i$ (which they are currently not doing), then they will get an additional prize of $1/2$ for a cost of $(1/2 + \epsilon)$, which decreases their utility. The total welfare for this equilibrium is $2 (1 - (1-\epsilon)) = 2\epsilon$.

Consider a socially optimal situation where player $i$ produces output along activity $i$: the total prize allocation is still $2*(2+\epsilon)$, but the total cost is $2$, so the total welfare is $2+2\epsilon$. Therefore, the price of anarchy is $\frac{2+2\epsilon}{2\epsilon}$, which tends to $+\infty$ as $\epsilon \rightarrow 0$.

\smallskip

\noindent\textbf{Not Subtracting Costs from Social Welfare.} In this definition of social welfare, too, a simple example shows that the price of anarchy is unbounded. Consider an instance with one player, two activities, and two contests. The two activities have a one-to-one correspondence with the contests, i.e., contest $j$ cares only about activity $j$. The player can win the first contest, which has a prize of $\epsilon$, with a cost of $\epsilon/2$, and she can win the second contest, which has a prize of $1$, with a cost of $1+\epsilon/2$. In equilibrium (unique), the player only wins the first contest. The PoS is $1/\epsilon$, which tends to $+\infty$ as $\epsilon \rightarrow 0$. 
Here, the social welfare and the utility of the player are not aligned, as the player cares about costs whereas the social welfare does not. Hence a bad PoS is expected.
\qed\end{proof}

\subsection{From Section~\ref{sec:multiActivity}}
\begin{proof}[Theorem~\ref{thm:hardBestResp}]
The following two lemmas prove the result separately for the cost and the budget models.

\begin{lemma}\label{thm:hardApproxBestRespCost}
In the linear multi-activity cost model, there is no polynomial-time approximation scheme for best-response, unless {\em P = NP}.
\end{lemma}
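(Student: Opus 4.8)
The plan is to give a gap-preserving reduction from a problem that is $\mathrm{NP}$-hard to approximate within some fixed constant factor --- a convenient choice is a gap version of \textsc{Set Cover} (equivalently \textsc{Dominating Set} on bounded-degree graphs) or of \textsc{Max-3Sat}. By the observation preceding Theorem~\ref{thm:hardBestResp} it suffices to treat the single-player case $n=1$, so the task is: given activities with cost slopes $c_\ell$ and contests with weight vectors $w_j$ and fixed prizes $V_j$, choose $\bb\in\bR_{\ge 0}^k$ maximising $\sum_{j:\,w_j\cdot\bb\ge 1}V_j-\sum_\ell c_\ell b_\ell$. I would first record the structural fact that this optimum equals $\max_{S\subseteq M}\bigl(\sum_{j\in S}V_j-\mathrm{cost}(S)\bigr)$, where $\mathrm{cost}(S)=\min\{\sum_\ell c_\ell b_\ell : w_j\cdot\bb\ge 1\ \forall j\in S,\ \bb\ge 0\}$ is the value of a covering linear program, hence poly-time computable for any \emph{fixed} $S$ (a $\bb$ realising $\mathrm{cost}(S)$ may win extra contests, which only helps, so the identity still holds). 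Thus all the hardness must sit in the combinatorial choice of which set $S$ of contests to win, and the reduction should make that choice encode the source instance.

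Given a source instance $I$, I would build a one-player game whose activities correspond to the atomic decisions of $I$ and whose contests fall into two groups: a ``structural'' group whose prizes make it strictly profitable for the player's $\bb$ to encode a syntactically valid and consistent solution of $I$, and a ``scoring'' group whose won prizes sum, up to an additive constant, to the objective value of $I$ on the encoded solution. Tuning the prize magnitudes and cost slopes so that (i) an optimal $\bb$ always satisfies the structural contests, hence decodes to a feasible solution of $I$, and (ii) the residual objective is an affine function of the objective of $I$, one gets that a $(1+\epsilon)$-approximate best response decodes to a solution of $I$ within a factor $1+\delta(\epsilon)$ of optimal, with $\delta(\epsilon)\to 0$ as $\epsilon\to 0$. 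Since approximating $I$ within some fixed constant is $\mathrm{NP}$-hard, for small enough $\epsilon$ a polynomial-time $(1+\epsilon)$-approximation of best response would give $\mathrm{P}=\mathrm{NP}$, ruling out a PTAS and establishing Lemma~\ref{thm:hardApproxBestRespCost} (and, with the analogous budget case, Theorem~\ref{thm:hardBestResp}).

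The step I expect to be the main obstacle is controlling \emph{fractional} outputs. Because a contest's winning condition is an additive threshold $w_j\cdot\bb\ge 1$, the player can spread output across activities; with $\{0,1\}$ weights this effectively replaces the combinatorial constraints of $I$ by their LP relaxation (in the extreme, setting every relevant activity to $1/2$), which typically has an integrality gap that would wipe out the reduction. I would handle this by using weights that are deliberately \emph{not} in $\{0,1\}$ (together with a few auxiliary contests/activities if needed), so that spreading output is strictly penalised and the best value attainable by \emph{any} $\bb$ is matched by that of a canonical near-integral $\bb$; making this exchange argument go through for the constructed instance is the technical core. A secondary point is scale-balancing: the prizes must be large enough that the player actually satisfies the structural contests, yet small enough that the optimum does not collapse to ``win every contest'' --- which would turn the whole problem into a single covering LP, hence place it in $\mathrm{P}$ --- and so that no single additive term dominates the objective, which would make a constant-factor approximation trivial and uninformative about $I$.
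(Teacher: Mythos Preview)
Your high-level plan---reduce from an APX-hard covering problem, with one block of ``structural'' contests and one block of ``scoring'' contests, so that the best-response objective is an affine function of the source objective---is exactly what the paper does (it reduces from \textsc{4-Regular-SetCover}). Where you diverge is in the handling of fractional outputs, and there your diagnosis is off. You assert that with $\{0,1\}$ weights the player's problem collapses to an LP relaxation and propose to fix this with carefully chosen non-$\{0,1\}$ weights. The paper, by contrast, uses $\{0,1\}$ weights throughout and kills fractionality purely through the prize/cost scaling you relegated to a ``secondary point''. Concretely: for each activity $\ell$ there is a dedicated contest in $M_1$ with prize larger than the \emph{total} prize of all scoring contests $M_2$, and winning it requires $b_\ell\ge 1$ on that single activity. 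Hence any $b_\ell\in(0,1)$ wins no $M_1$ prize; if the fractional coordinates sum to less than $1$ they win nothing in $M_2$ either, and if they sum to at least $1$ their cost already exceeds all of $M_2$'s prizes. So the optimal $\bb$ is $\{0,1\}$-valued for free, and the residual problem is exactly set cover.

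In short, the mechanism you flagged as secondary is the actual core, and the mechanism you proposed as the core (non-$\{0,1\}$ weights plus an exchange argument) is unnecessary and would be harder to push through. If you rewrite with this in mind---big single-activity contests to force integrality, small multi-activity contests to encode the covering objective, cost slope just above the big prize so that each selected activity has net marginal profit roughly the scoring contribution minus one---the rest of your outline goes through essentially verbatim.
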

\begin{proof}[Lemma~\ref{thm:hardApproxBestRespCost}]
We provide a reduction from {\sc 4-Regular-SetCover}, which is NP-hard to approximate by a factor of $2-\epsilon$ for any $\epsilon > 0$~\cite{holmerin2002vertex}. 

{\sc 4-Regular-SetCover}: An instance of this problem is given by a universe $\cU = \{1, 2, \ldots, \hat{m}\}$, a set $\cS = \{S_1, S_2, \ldots, S_{\hat{k}}\}$, where each $S_j$ is a subset of $\cU$ of size $4$ ($|S_j|=4$). The objective is to find a set cover of $\cU$ of minimum possible size, i.e., find a collection of sets $\cS'\subseteq \cS$ with minimum possible $|\cS'|$ such that $\cup_{S_j\in \cS'}=\cU$.
We can assume without loss of generality that 
$\cup_{S_j\in \cS}=\cU$ (as otherwise we obviously cannot have a set cover).

We create an instance of the multi-activity cost model with $1$ player, $m = \hat{k} + \hat{m}$ contests, and a set of $k=\hat{k}$ activities $K$. We split the contests into two groups of size $\hat{k}$ and $\hat{m}$, respectively:  $M_1 = \{1, \ldots, \hat{k}\}$ and $M_2 = \{\hat{k}+1, \ldots, \hat{k}+\hat{m}\}$.
\begin{itemize}
    \item Each contest in $M_1$ has a prize of value $(\hat{m} + 1)$. Let the weights for these contests be defined as:
    \[
        w_{j,\ell} = \begin{cases} 1,&\text{ for $\ell = j$}\\
        0,&\text{ otherwise}\end{cases}
    \]
    where $j \in M_1$ and $\ell \in K$. That is, a contest $j \in M_1$ awards a prize of $(\hat{m} + 1)$ to the player as long as she produces an output of at least $1$ along activity $j$.
    \item Each contest in $M_2$ has a prize of value $1+1/(\hat{m}+1)$. To receive the prize of contest $(j+\hat{k}) \in M_2$, the player needs to produce a total output of $1$ along activities $\{ \ell \mid j \in S_{\ell} \}$. Formally, the weights are:
    \[
        w_{j,\ell} = \begin{cases} 1,&\text{ if $ (j-\hat{k}) \in S_{\ell}$}\\
        0,&\text{ otherwise}\end{cases}
    \]
    where $j \in M_2$ and $\ell \in K$.
    \item The player has a marginal cost of $(\hat{m} + 2)$ for producing output along any of the activities, i.e., $c_{1,\ell} = \hat{m} +2$ for $\ell \in K$.
\end{itemize}

Observe that the contests in $M_1$ are much more valuable to the player than the ones in $M_2$: each contest in $M_1$ has a prize of $(\hat{m}+1)$, which is more than the total prize of all contests in $M_2$ which is equal to $\hat{m} + \hat{m}/(\hat{m}+1)$.
Note that it is sub-optimal for the player to produce an output that is strictly more than $1$ along any activity: this would contribute to her cost without affecting the prizes she gets. Also, it is sub-optimal for a player to produce an output that is strictly between $0$ and $1$ for any activity. Indeed, let 
$K'=\{\ell\in K\mid 0<b_{1, \ell}<1\}$, and set
$\beta=\sum_{\ell\in K}b_{1, \ell}$. Suppose for the sake of contradiction that $K'\neq\emptyset$. The activities in $K'$ do not help the player to win contests in $M_1$. If $\beta< 1$, they do not help the player to win contests in $M_2$ either, so the player can improve her utility by 
setting $b_{1, \ell}=0$ for each $\ell\in K'$.
If $\beta\ge 1$, these efforts may contribute towards winning contests in $M_2$, whose total prize value is $\hat{m} + \hat{m}/(\hat{m}+1)$, but then they incur a cost of at least $\beta\cdot(\hat{m} + 2) \ge \hat{m} + 2 > \hat{m} + \hat{m}/(\hat{m}+1)$, so again the player can improve her utility by 
setting $b_{1, \ell}=0$ for each $\ell\in K'$.
 Hence, when playing a best response, the player produces binary outputs (either $0$ or $1$) along each activity. In other words, she selects a subset of activities from $K$ and produces an output of $1$ for each.

Without loss of generality, we can also assume that the player will select an action (subset of activities) that wins her all the contests in $M_2$. For contradiction, say the player selects an action $A \subset K$ such that there is a contest $(j+\hat{k}) \in M_2$ that the player does not win. As per our assumption that every element of the universe $\cU$ in the {\sc 4-Regular-SetCover} instance is covered by at least one set in $\cS$, so there is an $\ell \in [\hat{k}]$ such that $j \in S_{\ell}$. $\ell \notin A$ because the player does not win contest $j+\hat{k}$. Now, if the player selects action $A'= \{\ell\} \cup A$, i.e., produces an output of $1$ along activity $\ell$ in addition to the ones in $A$, then her prize increases by $(\hat{m}+1)$ because of contest $\ell \in M_1$ and by $1 + 1/(\hat{m}+1)$ because of contest $(j+\hat{k}) \in M_2$, while her cost increases by $(\hat{m}+2)$. So, her utility for playing $A'$ is strictly more than $A$. 

In summary, given any action by a player, we can can convert this action to another action where the player produces an output of $1$ for a subset of activities $A \subseteq K$, produces an output of $0$ for activities in $K \setminus A$, and wins all the contests in $M_2$. And, this conversion increases the player's utility and can be done in polynomial time. Further, this new action directly corresponds to a set cover for the {\sc 4-Regular-SetCover} instance.

We claim that the player cannot compute an action $A \subseteq K$ in polynomial time that gives her a utility of at least $(3+\epsilon)/4$ of the optimal utility, for any $\epsilon > 0$. As per our discussion above, w.l.o.g., $A$ is a set cover, i.e., $\cup_{\ell \in A}S_{\ell} = \cU$. Let $A^* \subseteq [\hat{k}]$ be an optimal set cover. The utility of the player for action $A$ is $\hat{m} + \frac{\hat{m}}{\hat{m}+1} - |A|$ while the utility for action $A^*$ is $\hat{m} + \frac{\hat{m}}{\hat{m}+1} - |A^*|$. If the utility for playing $A$ is at least $(3+\epsilon)/4$ of the optimal utility, then
\begin{multline}\label{eq:costBestRes1}
    \hat{m} + \frac{\hat{m}}{\hat{m}+1} - |A| \ge \frac{3+\epsilon}{4} \left( \hat{m} + \frac{\hat{m}}{\hat{m}+1} - |A^*|  \right) \\
    \implies \frac{1-\epsilon}{4} \left( \hat{m} + \frac{\hat{m}}{\hat{m}+1}\right) + \frac{3+\epsilon}{4} |A^*| \ge |A|.
\end{multline}
Now, as $|S| = 4$ for every $S \in \cS$, therefore 
\[
    |A^*| \ge \frac{|\cU|}{4} = \frac{\hat{m}}{4} \implies 4|A^*| \ge \hat{m} \implies 5|A^*| \ge \hat{m} + 1 \ge \hat{m} + \frac{\hat{m}}{\hat{m}+1}.
\]
Putting this in \eqref{eq:costBestRes1} gives us
\begin{equation*}
    \frac{1-\epsilon}{4} 5|A^*| + \frac{3+\epsilon}{4} |A^*| \ge |A| \implies |A| \le (2-\epsilon)|A^*|.
\end{equation*}
So, $A$ is a $2-\epsilon$ approximation of the optimal set cover $A^*$ for the arbitrary {\sc 4-Regular-SetCover} instance, which is not possible unless $P = NP$~\cite{holmerin2002vertex}.

\qed\end{proof}

\begin{lemma}\label{thm:hardApproxBestRespBudget}
In the linear multi-activity budget model, there is no polynomial-time approximation scheme for best-response, unless {\em P = NP}.
\end{lemma}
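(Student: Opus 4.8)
The plan is to mirror the proof of Lemma~\ref{thm:hardApproxBestRespCost}. I would reduce from {\sc 4-Regular-SetCover}, which is NP-hard to approximate within $2-\epsilon$~\cite{holmerin2002vertex}, to the problem of best-responding for a single player in a linear budget game. Given an instance with universe $\cU=[\hat m]$ and sets $\cS=\{S_1,\dots,S_{\hat k}\}$ with $|S_\ell|=4$, I would create one activity per set $S_\ell$, and "coverage contests" $E_1,\dots,E_{\hat m}$, one per element, where the player wins $E_j$ iff $\sum_{\ell : j\in S_\ell} b_{1,\ell}\ge 1$ — exactly the $M_2$-contests of the cost-model proof. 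The final step of the argument would then be identical to the cost case: using $|A^*|\ge |\cU|/4$, an approximation of the best-response value within a suitable constant $c<1$ would force the produced action to correspond to a set cover of size at most $(2-\epsilon)|A^*|$, contradicting the inapproximability of {\sc 4-Regular-SetCover}.

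The difference, and where the construction must do new work, is that the budget model has no cost function, so the mechanism that in the cost proof makes the player want a \emph{minimum} set cover has to be recreated from the budget constraint. I would scale the budget so that the player can afford only a limited number of "full" activations ($b_{1,\ell}=1$), and add auxiliary high-value contests — in the role of the $M_1$-contests plus the marginal cost of the cost-model instance — engineered so that each full activation is \emph{almost} utility-neutral but is strictly profitable exactly when it covers a previously uncovered element. With the right prizes this makes the best-response value equal to a fixed additive term minus the size of a minimum set cover (up to a uniform multiplicative rescaling), just as the cost-model utility equals $\hat m+\tfrac{\hat m}{\hat m+1}-|A|$ over set covers $A$, and the hardness transfers verbatim.

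The main obstacle is ruling out fractional (sub-unit) outputs. In the cost proof this was immediate: a bundle of outputs with $\ell_1$-mass at least $1$ costs more than the total prize of all coverage contests, and one with mass below $1$ wins nothing. In the budget model there is no cost to lean on, and coordinated half-activations of several set-activities can jointly push a coverage contest over its threshold — this is exactly the integrality gap of fractional set cover, so a careless instance would let the player solve the (polynomial-time) LP relaxation instead of integral set cover. I would neutralize this either (i) by making the auxiliary high-value contests depend on \emph{single} activities, so that those activities are effectively pinned to $\{0,1\}$, and choosing prizes and budget so that putting any budget into a not-yet-complete activation is strictly dominated by completing an activation that covers a new element — reducing an optimal response, without loss of generality, to a full activation of a set of activities; or (ii) by reducing instead from {\sc Max-$k$-Cover}, whose $(1-1/e+\epsilon)$-inapproximability already precludes a PTAS, and designing the instance so the fractional slack cannot be exploited. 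Working out which of these variants goes through, together with the exact prize values and budget scale, is the bulk of the remaining calculation, but it is routine once the template is fixed.
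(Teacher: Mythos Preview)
You correctly isolate the crux --- ruling out fractional outputs --- but then defer it as ``routine,'' and that is where the argument breaks. For option~(ii), a direct {\sc Max-$k$-Cover} reduction fails: since the best-response problem admits fractional outputs, its optimum is the LP relaxation value; a PTAS for best-response yields a fractional solution of value $(1-\epsilon)\,\mathrm{OPT}_{\mathrm{LP}}$, and rounding it to an integral cover loses the usual $(1-1/e)$ factor, giving nothing beyond the known poly-time $(1-1/e)$-approximation --- no contradiction with the $(1-1/e+\epsilon)$-hardness. For option~(i), single-activity high-value contests pin \emph{individual} activities, but you still must round a solution where several sub-unit activations jointly satisfy a coverage contest. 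In {\sc 4-Regular-SetCover} one activity (set) can participate in unboundedly many coverage contests, so zeroing it out can unsatisfy many of them with no local compensation; this is precisely the integrality-gap obstruction you flag, and adding auxiliary contests does not make it disappear.

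The paper sidesteps this by reducing from a different source problem, {\sc MAX-2-SAT-3}, rather than trying to mirror the cost-model proof. The instance has two activities per variable (one per literal), budget $\hat n$, one contest per clause with criterion $b_i+b_j\ge 1$, and \emph{two} single-activity contests per literal with criterion $b_i\ge 1$. The point is the bounded-occurrence property: each literal appears in at most three clauses. If a clause contest is satisfied fractionally ($b_i+b_j\ge 1$ with both $<1$), set $b_i\leftarrow 1$ and $b_j\leftarrow 0$; this destroys at most two other clause contests through $b_j$, but newly wins the two single-literal contests for $b_i$, so the objective does not drop. Iterating yields a $\{0,1\}$-solution at least as good as the fractional one, after which the Berman--Karpinski gap for {\sc MAX-2-SAT-3} rules out a PTAS. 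The missing idea in your plan is exactly a bounded-degree source problem that makes the rounding local and lossless; choosing that problem is the heart of the proof, not a routine calculation.
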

\begin{proof}[Lemma~\ref{thm:hardApproxBestRespBudget}]
We prove this result by providing a reduction from \textsc{MAX-2-SAT-3}. In an instance of the \textsc{MAX-2-SAT-3} problem, we are given a 2-CNF formula where each variable occurs at most $3$ times, and the objective is to find an assignment of the variables to satisfy as many clauses of the 2-CNF formula as possible. It is known that it is NP-hard to approximate this problem beyond a constant factor~\cite{berman1999some}. 

Consider an instance of \textsc{MAX-2-SAT-3} with $\hat{n}$ variables $x_1, x_2, \ldots, x_{\hat{n}}$. Let there be $\hat{m}$ clauses in the 2-CNF formula; w.l.o.g. assume that $\hat{m} \ge \hat{n}$. We construct an instance of the multi-activity budget game with one player, $k = 2 \hat{n}$ activities, and $m = \hat{m} + 4 \hat{n}$ contests. The player has a budget of $\hat{n}$ that she can allocate across any of the $2\hat{n}$ activities. Each contest awards a prize of $1$.
\begin{itemize}
    \item Activities: Corresponding to each variable in the 2-CNF formula, we have two activities in the game. For variable $x_i$, we have activity $i$ with output $b_i$, corresponding to the literal $x_i$, and we have activity $\hat{n}+i$ with output $b_{\hat{n}+i}$, corresponding to the literal $\neg x_i$. We sometimes denote $b_{\hat{n}+i}$ by $b_i'$ for a cleaner presentation.
    \item Contests: The $m = \hat{m} + 4 \hat{n}$ contests are of two types:
    \begin{enumerate}
        \item Contests $M_1$. There are $\hat{m}$ contests of this type, and these contests correspond to the clauses in the 2-CNF formula. For a clause $x_i \land x_j$, we have a contest with condition $b_i + b_j \ge 1$. Similarly, for a clause $x_i \land \neg x_j$, there is a contest with condition $b_i + b_j' \ge 1$, and so on for the other types of clauses. A contest in $M_1$ can be identified by the associated pair of activities $(i,j) \in [2 \hat{n}] \times [2 \hat{n}]$.
        \item Contests $M_2$. There are $4\hat{n}$ contests of this type, and they correspond to the $2\hat{n}$ possible literals in the 2-CNF formula. For each $i \in [\hat{n}]$, there are two contests with condition $b_i \ge 0$ and two with condition $b_i' \ge 0$.
        % \item Contests $M_3$. There are $\hat{n}$ contests of this type, and they correspond to the $\hat{n}$ variables in the 2-CNF formula. For each $i \in [\hat{n}]$, there is a contest with criteria $b_i + b_i' \ge 0$.
    \end{enumerate}
\end{itemize}
As each contest has a prize of $1$, the objective of the player is to satisfy as many contests as possible.

Let us assume that for linear multi-activity budget games, we have a polynomial-time algorithm that approximates the best response by a factor $\gamma \in (0,1]$.
Let $\bb$ be a solution for the instance of the problem we just constructed, found in polynomial time using this approximation algorithm.

From this approximate solution $\bb$, we construct a $\{0,1\}$-integral approximate solution with at least as good an approximation ratio as $\bb$, as described below:
\begin{enumerate}
    \item If there is an $i \in [2 \hat{n}]$ such that $b_i > 1$, set $b_i = 1$. Notice that this transformation does not affect the number of satisfied contests. In the subsequent transformations of $\bb$, we will not touch these $b_i$ values (where $b_i = 1$) and the contests that contain them. So, the contests where they show up will remain satisfied. Let these contests be denoted by $M^{(1)}$.
    \item Let $M_1^{(2)}$ denote the contests in $M_1 \setminus M^{(1)}$ that are satisfied by $b$. Do the following updates to $\bb$ and $M_1^{(2)}$ until  $M_1^{(2)} = \emptyset$:
    \begin{itemize}
        \item Take an arbitrary contest $(i,j)$ in $M_1^{(2)}$. We know that the following conditions hold: $b_i + b_j \ge 1$, $b_i < 1$, and $b_j < 1$. 
        \item Set $b_i = 1$ and $b_j = 0$. As the value of $b_i$ increases, the contests that contain $b_i$ and were satisfied before this update remain satisfied after this update. On the other hand, as every variable may occur in at most three clauses, $b_j$ may occur in at most two more contests in $M_1^{(2)}$ in addition to the current contest $(i,j)$. These contests become unsatisfied as a result of this transformation, but they are compensated by the two contests in $M_2$ that correspond to $b_i$, which were unsatisfied before this update, but become satisfied after this update.
        \item Remove all contests from $M_1^{(2)}$ that correspond to the activities $i$ or $j$, i.e., set $M_1^{(2)} = M_1^{(2)} \setminus \{ (\ell, t) \mid  \{\ell, t\} \cap \{i,j\} \neq \emptyset \}$.
    \end{itemize}
    As the size of $M_1^{(2)}$ decreases after each update without decreasing the total number of satisfied contests, the process terminates with a solution as good as before.
    \item As a result of the previous step, there is no contest in $M_1$ that is satisfied but has an output strictly less than $1$ for both of its activities. In other words, for any contest $(i,j) \in M_1$, $b_i + b_j \ge 1 \implies \max(b_i, b_j) = 1$. Moreover, none of the contests in $M_2$ can be satisfied by an activity $i$ with $b_i < 1$. So, we can safely set $b_i = 0$ for all $i$ with $b_i < 1$, without decreasing the number of satisfied contests.
\end{enumerate}
Thus, we can assume without loss of generality that the solution for the instance of the linear multi-activity budget game is $\{0,1\}$-integral. We can also assume that $\sum_{i \in 2 \hat{n}} b_i = \hat{n}$, because setting additional $b_i$ values to $1$ does not decrease the objective. 

As mentioned before, \textsc{MAX-2-SAT-3} is NP-hard to approximate beyond a constant factor, even when the number of clauses is of the same order as the number of variables, i.e., $\hat{m} = \Theta(\hat{n})$, as proved by Berman and Karpinski~\cite{berman1999some}, given in the theorem below. 

\begin{theorem}\cite{berman1999some}\label{thm:bk}
For any $0 < \epsilon < 1/2$, it is {\em NP}-hard to decide whether an instance of {\sc MAX-2-SAT-3} with $\hat{m} = \mu_1 \hat{n}$ clauses has a truth assignment that satisfies $ (\mu_2 + 1 - \epsilon) \hat{n}$ clauses, or every truth assignment satisfies at most $(\mu_2 + \epsilon) \hat{n}$ clauses, where $\mu_1$ and $\mu_2$ are two positive constants.\footnote{$\mu_1 = 2016$ and $\mu_2 = 2011$. Berman and Karpinski~\cite{berman1999some} call the problem \textsc{3-OCC-MAX-2SAT} instead of \textsc{MAX-2-SAT-3}.}
\end{theorem}
Theorem~\ref{thm:bk} proves that \textsc{MAX-2-SAT-3} cannot be approximated better than a factor $(\mu_2 + 1)/\mu_2 - \epsilon$. Now, if an instance of the $\textsc{MAX-2-SAT-3}$ problem has a truth assignment that satisfies $\gamma \hat{n}$ clauses, then the constructed instance of the linear budget game has a best response that satisfies $(\gamma + 4) \hat{n}$ contests. So, for any $0 < \epsilon < 1/2$, it is NP-hard to decide whether an instance of the linear budget game with $m = \hat{m} + 4 \hat{n} = (\mu_1 + 4)\hat{n}$ contests has a best response that satisfies $ (\mu_2 + 5 - \epsilon) \hat{n}$ contests, or it can be at most $(\mu_2 + 4 + \epsilon) \hat{n}$. This gives us an approximation ratio of $(\mu_2 + 5)/(\mu_2 + 4) - \epsilon$.

% Let this constant factor be $\hat{\gamma} \in (0,1)$. Additionally, this result holds even when it is given that the optimal solution is a constant factor of the number of variables in the CNF formula. Formally, it is hard to tell whether an optimal solution is $\le \hat{\gamma}  \mu  \hat{n}$ or $\ge \mu  \hat{n}$, for some constant $\mu \in (0,1)$.

% Using this result, we get the approximation ratio of the linear multi-activity budget game to be $\gamma \le \frac{\hat{\gamma} \mu + 2}{\mu + 2}$, which is a constant. Because, if the instance of the $\textsc{MAX-2-SAT-3}$ problem has the optimal objective value $\mu \hat{n}$, then the constructed instance of the linear budget game has the optimal objective value $(\mu + 2) \hat{n}$. If $\gamma > \frac{\hat{\gamma} \mu + 2}{\mu + 2}$, then the approximate solution for the game will have an objective value $\frac{\hat{\gamma} \mu + 2}{\mu + 2}  (\mu + 2) \hat{n} = \hat{\gamma} \mu \hat{n} + 2 \hat{n}$. So, strictly more than $\hat{\gamma} \mu \hat{n}$ contests from $M_1$ need to be satisfied, which provides a solution for the instance of the $\textsc{MAX-2-SAT-3}$ problem strictly better than $\hat{\gamma}$ times the optimal, contradiction.
\qed\end{proof}
\qed\end{proof}

\begin{proof}[Theorem~\ref{thm:fptBestResponse}]
We prove the result first for the case when the number of activities is a constant and then for the case when the number of players is a constant.

\textbf{Constant number of activities ($k$ constant).} Notice that in the linear multi-activity model, the criterion for a given contest $j$ is a linear constraint of the form $\sum_{\ell \in K} w_{j,\ell} b_{i,\ell} \ge 1$, where $w_{j,\ell} \in \bR_{\ge 0}$ is a non-negative weight that contest $j$ has for activity $\ell$ and $\bb_i = (b_{i,\ell})_{\ell \in K}$ is the output vector for a player. This linear constraint divides $\bR^k$ into two half-spaces. The set of half-spaces in $\bR^k$ has a VC-dimension of $k+1$~\cite{kearns1994introduction}, and therefore, the set of constraints (one for each contests) also has a VC-dimension of at most $k+1$. Let $\cS$ be the following set of subsets of contests:
\begin{multline*}
    \cS = \{ S \subseteq M \mid \exists \bb_i \in \bR_{\ge 0}^k \text{ s.t. $\sum_{\ell \in K} w_{j,\ell} b_{i,\ell} \ge 1, \forall j \in S$,} \\
    \text{and $\sum_{\ell \in K} w_{j,\ell} b_{i,\ell} < 1, \forall j \in S$} \}.
\end{multline*}
In other words, for every set $S \in \cS$, there exists an action $\bb_i$ which satisfies the criteria for all the contests in $S$ and none of the contests not in $S$. As the set of criteria corresponding to the contests has a VC-dimension of $k+1$, the set $\cS$ is of size $O(m^{k+1})$ and the elements of $\cS$ can be enumerated in time polynomial in $m$ when $k$ is fixed.\footnote{See Lemma~3.1 and 3.2 of Chapter 3 of the book by Kearns and Vazirani~\cite{kearns1994introduction}.} The elements of $\cS$ can be computed inductively, see the proof of Lemma~3.1 of Kearns and Vazirani~\cite{kearns1994introduction}. 

Now, let us pick a set $S \in \cS$. For the cost model, we can efficiently solve a linear program where the objective is to minimize the (linear) cost function of the player and the constraints are the ones corresponding to the contests in set $S$. Similarly, for the budget model, we need to check the feasibility of the set of constraints for contests in $S$ plus the budget constraint of the player, which can also be done efficiently. Finally, we compile the optimal solution for each $S \in \cS$ to find the optimal solution overall.

\textbf{Constant number of contests ($m$ constant).} The proof is similar to the one for the case when $k$ is a constant. There are $2^m$ possible subsets of contests, we can efficiently enumerate over them because $m$ is a constant. For each subset of contests $S \subseteq M$, we repeat the same process (solve a linear programming problem) as the previous case when $k$ was a constant.
\qed\end{proof}

\subsection{From Section~\ref{sec:singleActivity}}
\begin{proof}[Theorem~\ref{thm:ppadPls}]
For now, let us focus on the single-activity budget model. At the end, we extend the result to the cost model.

In the proof sketch given in the main paper, we reduced a {\sc $3$-Polytensor} game to a single-activity budget game to emphasize the main ideas and give a cleaner presentation, but {\sc $3$-Polytensor} games have not yet been proven to be (PPAD$\cap$PLS)-complete. Rather, {\sc $5$-Polytensor} games are (PPAD$\cap$PLS)-complete~\cite{babichenko2021settling}, so we start from an instance of {\sc $5$-Polytensor}. % So, we start with an arbitrary instance of {\sc $5$-Polytensor} and reduce it to a single-activity budget game.

Take an arbitrary instance of {\sc $5$-Polytensor} with $n$ players; we shall use the same notation as Definition~\ref{def:polytensor}. We construct a single-activity game with $n$ players, $\sum_{i \in [n]} |A_i|$ activities, and a polynomial number of contests to be defined later.

The $\sum_{i \in [n]} |A_i|$ activities have a one-to-one association with the actions of the players. The activities are partitioned into $n$ subsets; the $i$-th subset has size $|A_i|$ and is associated with player $i$; we identify these activities by $A_i$. Player $i$ has a budget of $1$ that they can use to produce output along any activity from $A_i$, but they have $0$ budget for the activities in $A_j$ for $j \neq i$. Effectively, as we are in a single-activity model, player $i$ selects an activity from the activities in $A_i$ and produces an output of $1$ along it. Note that the players have disjoint sets of activities for which they can produce outputs.

All the contests we construct are associated with exactly five players and at most five activities. We shall denote a contest by $\cC_S(A)$, where
\begin{itemize}
    \item $S$ is the set of five distinct players based on whose joint utility function in the polytensor game, $u_S$, we shall specify the prize of contest $\cC_S(A)$;
    \item the contest $\cC_S(A)$ awards its prize to any player who produces an output of at least $1$ along the activities in $A$;
    \item the activities in $A$ are from $\cup_{i \in S} A_i $ with $|A| \le 5$ and $ |A \cap A_i| \le 1$ for $i \in S$.
\end{itemize}
We shall call a contest $\cC_S(A)$ a Type-$\ell$ contest if $|A| = \ell$.

We create the contests $(\cC_S(A))_A$ to exactly replicate the utility that the five players in $S$ get from $u_S$. If we can do this, then repeating the same process for every set of five players, we will replicate the entire {\sc $5$-Polytensor} game. The utility that player $i \in S$ gets from $u_S$ is $u_S(\ba_S)$, where the $\ba_S$ are the actions of the five players.
We have the following contests:

\textbf{Type-5 Contests.} Let us add a contest named $\cC_S(\ba_S)$ with prize $v_S(\ba_S)$ for every $\ba_S \in \times_{i \in S} A_i$. We shall specify the $v_S(\ba_S)$ values based on the $u_S(\ba_S)$ values, later. Contest $\cC_S(\ba_S)$ distributes its prize to players who produce output along the activities $\ba_S$. 
    
Say the players $S$ play the actions $\ba_S^* = (a_j^*)_{j \in S}$. 
Let $A_j^- = A_j \setminus \{a_j^*\}$; and for $T \subseteq S$, let $A_T = \times_{j \in T} A_j$ and $A_T^- = \times_{j \in T} A_j^-$. 
The total prize that player $i \in S$ gets from the Type-5 contests is:
\begin{align*}
    \sum_{\ell \in [5]} \sum_{T \subseteq S, i \in T, |T| = \ell} \sum_{\ba_{S \setminus T} \in A_{S \setminus T}^-} \frac{v_S(\ba_T^*, \ba_{S \setminus T})}{\ell} .
\end{align*}
In the above formula, the outer summation with $\ell$ is for the number of players that player $i$ has to share their prize with, including $i$; the middle summation is for the $\ell$ players, $T$; the inner summation is for the actions of the players in $S \setminus T$. The same formula can be rewritten as
\begin{align*}
    &\sum_{\ba_{S \setminus \{i\}} \in A_{S \setminus \{i\}}} v_S(a_i^*, \ba_{S \setminus \{i\}}) -\sum_{2 \le \ell \le 5} \sum_{T \subseteq S, i \in T, |T| = \ell} \sum_{\ba_{S \setminus T} \in A_{S \setminus T}^-} \frac{\ell-1}{\ell} v_S(\ba_T^*, \ba_{S \setminus T}).
\end{align*}

\textbf{Type-1 Contests.} Let us add a contest named $\cC_S(a_i')$ with prize\\
$\sum_{a_i \neq a_i', \ba_{S \setminus \{i\}} \in A_{S \setminus \{i\}}}  v_S(a_i, \ba_{S \setminus \{i\}})$ for every $i \in S$ and $a_i' \in A_i$. This contest $\cC_S(a_i')$ awards its prize to any player who produces output along activity $a_i'$. The total prize that player $i$ gets from Type-1 and Type-5 contests is
\begin{align*}
    &\sum_{\ba_S \in A_S} v_S(\ba_S) -\sum_{2 \le \ell \le 5} \sum_{T \subseteq S, i \in T, |T| = \ell} \sum_{\ba_{S \setminus T} \in A_{S \setminus T}^-} \frac{\ell-1}{\ell} v_S(\ba_T^*, \ba_{S \setminus T}).
\end{align*}
As $\sum_{\ba_S \in A_S} v_S(\ba_S)$ does not depend upon the action $a_i^*$ selected by $i$, the utility of player $i$ is effectively
\begin{equation*}
    -\sum_{2 \le \ell \le 5} \sum_{T \subseteq S, i \in T, |T| = \ell} \sum_{\ba_{S \setminus T} \in A_{S \setminus T}^-} \frac{\ell-1}{\ell} v_S(\ba_T^*, \ba_{S \setminus T}).
\end{equation*}

\textbf{Type-2 Contests.} Let us add a contest named $\cC_S(a_i', a_j')$ with prize\\
$\sum_{\ba_{S \setminus \{i,j\}} \in A_{S \setminus \{i,j\}}} \frac{v_S(a_i',a_j',\ba_{S \setminus \{i,j\}})}{2}$ for every $i,j \in S, i \neq j$ and $a_i' \in A_i$ and $a_j' \in A_j$. This contest $\cC_S(a_i', a_j')$ awards its prize to any player who produces output along activity $a_i'$ or $a_j'$. The net utility of player $i$ gets from Type-5, Type-1, and Type-2 contests is
\begin{multline*}
    \sum_{3 \le \ell \le 5} \sum_{T \subseteq S, i \in T, |T| = \ell}  \sum_{\ba_{S \setminus T} \in A_{S \setminus T}^-}  \left(\frac{1}{2}\frac{4.\binom{3}{\ell-2}}{\binom{4}{\ell-1}} - \frac{\ell-1}{\ell} \right) v_S(\ba_T^*, \ba_{S \setminus T}) \\
    = \sum_{T \subseteq S, i \in T, |T| = 3} \sum_{\ba_{S \setminus T} \in A_{S \setminus T}^-}  \frac{1}{3} v_S(\ba_T^*, \ba_{S \setminus T}) \\
    + \sum_{T \subseteq S, i \in T, |T| = 4} \sum_{\ba_{S \setminus T} \in A_{S \setminus T}^-}  \frac{3}{4} v_S(\ba_T^*, \ba_{S \setminus T}) + \frac{6}{5} v_S(\ba_S^*).
\end{multline*}

\textbf{Type-3 Contests.} Let us add a contest named $\cC_S(\ba_T')$ with prize\\
$\sum_{\ba_T \neq \ba_T', \ba_{S \setminus T} \in A_{S \setminus T}} \frac{v_S(\ba_T, \ba_{S \setminus T})}{3}$ (effectively, the prize is $-\sum_{\ba_{S \setminus T} \in A_{S \setminus T}} \frac{v_S(\ba_T', \ba_{S \setminus T})}{3}$) for every $T \subset S$ with $|T|=3$ and for every $\ba_T' \in A_T$. This contest $\cC_S(\ba_T')$ awards its prize to any player who produces output along activities $\ba_T'$. The net utility of player $i$ gets from Type-5, Type-1, Type-2, and Type-3 contests is
\begin{multline*}
    \sum_{T \subseteq S, i \in T, |T| = 3} \sum_{\ba_{S \setminus T} \in A_{S \setminus T}^-}  \left(\frac{1}{3}-\frac{1}{3}\right) v_S(\ba_T^*, \ba_{S \setminus T}) \\
     + \sum_{T \subseteq S, i \in T, |T| = 4} \sum_{\ba_{S \setminus T} \in A_{S \setminus T}^-} \left(\frac{3}{4}-1\right) v_S(\ba_T^*, \ba_{S \setminus T}) + \left(\frac{6}{5}-2\right) v_S(\ba_S^*) \\
    = -\sum_{j \neq i} \sum_{a_j \in A_j^-} \frac{1}{4} v_S(\ba_{S \setminus \{j\}}^*, a_j) - \frac{4}{5} v_S(\ba_S^*).
\end{multline*}

\textbf{Type-4 Contests.} Let us add a contest named $\cC_S(\ba_T')$ with prize\\
$\sum_{\ba_{S \setminus T} \in A_{S \setminus T}} \frac{v_S(\ba_T', \ba_{S \setminus T})}{4}$ for every $T \subset S$ with $|T|=4$ and for every $\ba_T' \in A_T$. This contest $\cC_S(\ba_T')$ awards its prize to any player who produces output along activities $\ba_T'$. The net utility of player $i$ gets from all the five types of contests is
\[
    \frac{1}{5} v_S(\ba_S^*).
\]
We set $v_S(\ba_S) = 5 u_S(\ba_S)$ for every $\ba_S \in A_S$, and we are done.

\textbf{Cost.} 
It is easy to adapt the above proof to the single-activity cost model. All the players, contests, and activities are the same, we just need to replace the budgets with costs. In the proof above, we used the budgets to make sure that player $i$ can produce output only along activities in $A_i$ (note that we are in a single-activity model so they will produce output along exactly one activity). We can achieve that same for the cost model: for player $i$, set a very low cost for the activities in $A_i$ and very high cost for the activities in $A_j$, $j \neq i$, such that player $i$ always wants to produce output along one of the activities in $A_i$ and never along the activities not
in $A_i$.

% It is easy to adapt the above proof to the single-activity cost model. All the players, contests, and activities constructed for the budget model remain the same. We add additional $n$ contests and replace the budgets with costs. In the proof above, we used the budgets to make sure that player $i$ can produce output only along one activity in $A_i$. We can achieve the same for the cost model. 

% Let $H$ be the maximum total prize that any player can get in the game constructed for the budget model. For each $i \in [n]$ add a contest that awards a prize of $H+2$ to anyone who produces a total output of $1$ along activities in $A_i$. A player $i$ has a linear cost function with slope $H+1$ for producing an output along activities in $A_i$, but has a linear cost function with a much higher slope ($\ge 2H+3$ works) for producing an output along activities not in $A_i$. This construction ensures that a player $i$ never produces any output along activities not in $A_i$ and produces an output of $1$ along exactly one of the activities in $A_i$.
\qed\end{proof}

\begin{proof}[Theorem~\ref{thm:plsCompleteness}]
We prove this result by reducing the problem of finding a locally optimal max-cut solution in a graph, which is known to be a PLS-complete problem~\cite{johnson1988easy}, to our problem. Let $(V,E)$ be an arbitrary graph and let $w_e \in \bZ_{\ge 0}$ be the weight of edge $e \in E$. A cut $(S,\overline{S})$ (where $\overline{S} = V \setminus S$) is locally optimal if one cannot improve the max-cut objective $\sum_{(u,v) \in E, u \in S, v \in \overline{S}} w_{u,v}$ by moving a vertex $u$ from its current partition $S$ (or $\overline{S}$) to the other partition $\overline{S}$ (or $S$). 

\textbf{Budget.} Let us first focus on the single-activity budget model, we shall slightly tweak the same construction for the single-activity cost model.

\textit{Construction.} We construct an instance of the budget game corresponding to the given graph $(V,E)$ as follows:
\begin{enumerate}
    \item Let there be $|V|$ players.
    \item Let there be $2|V|$ activities, which we identify by $\alpha_v$ and $\beta_v$ for $v \in V$.
    \item Let each player have a budget of $1$ that they can spend on any of the activities. As we are in a single-activity setup, in a pure strategy, a player would select an activity and produce an output of $1$ along that activity. Note that the strategy space is symmetric across the players.
    \item Let there be $|V| + 2|E|$ contests. 
    \begin{itemize}
        \item We identify the first $|V|$ contests using the vertices. The contest that corresponds to $v \in V$ equally distributes a large prize of $4\sum_{e \in E} w_e + 2$ to the players who produce a total output of at least $1$ along the two activities $\alpha_v$ and $\beta_v$ corresponding to $v$, i.e., a player $i$ receives a share of the prize if $b_{i,\alpha_v} + b_{i,\beta_v} \ge 1$. We call these contests \textit{vertex} contests.
        \item We identify the other $2|E|$ contests using the edges, two contests per edge. For an edge $e = (u,v) \in E$, one of the contests distributes a prize of $w_e$ to the players who produce an output of at least $1$ along the two activities $\alpha_u$ and $\alpha_v$, and the other contest distributes a prize of same value $w_e$ to the players who produce an output of at least $1$ along $\beta_u$ and $\beta_v$. We call these contests \textit{edge} contests.
    \end{itemize}
\end{enumerate}

\textit{Analysis.} Observe that the total prize awarded by the $2|E|$ edge contests is $2(\sum_{e \in E} w_e)$, while the prize awarded by a single vertex contest is $4\sum_{e \in E} w_e + 2$, which is much larger. As a player can produce output along only one activity, and as each vertex contest $v$ values a unique disjoint pair of activities $(\alpha_v, \beta_v)$, no player can win prizes from more than one vertex contest. Also, as the number of players is equal to the number of the vertex contests, each player can get a complete share of one of the vertex contests. In a PNE, each player selects a distinct pair $(\alpha_v, \beta_v)$ and produces output along one of the activities in the pair, because a player can move from a shared prize to an unshared prize increasing their utility by at least $2\sum_{e \in E} w_e + 1$ from the vertex contests, which dominates maximum total loss of $2\sum_{e \in E} w_e$ from the $2|E|$ edge contests. Let us denote the player who produces output along the pair of activities $(\alpha_v, \beta_v)$ as $i_v$.

Let $S = \{ v \in V \mid b_{i_v,\alpha_v} = 1 \}$ and $\overline{S} = V \setminus S = \{ v \in V \mid b_{i_v,\beta_v} = 1 \}$. For an edge $(u,v)$, if both players $i_u$ and $i_v$ are in $S$ or $\overline{S}$, then they share the prize of $w_{u,v}$, on the other hand if one of them is in $S$ and the other in $\overline{S}$ then they each get $w_{u,v}$. The value of the potential function (Equation~\eqref{eq:potFun}) $\phi$ can be written using $S$ and $\overline{S}$ as
\begin{multline*}
    \phi(S,\overline{S}) = \left(4\sum_{e \in E} w_e + 2\right) |V| + \left( \frac{3}{2} \sum_{e \in E} w_e + \frac{1}{2} \sum_{(u,v) \in E, u \in S, v \in \overline{S}} w_{u,v} \right) \\
    = \left(4 |V| + \frac{3}{2} \right) \sum_{e \in E} w_e + 2 |V| + \frac{1}{2}\sum_{(u,v) \in E, u \in S, v \in \overline{S}} w_{u,v},
\end{multline*}
where the contribution of $|V| (4\sum_{e \in E} w_e + 2)$ is from the $|V|$ vertex contests and a contribution of $3 w_{u,v}/2$ (if both $u$ and $v$ are in $S$ or $\overline{S}$) or $2 w_{u,v}$ (if one of $u$ or $v$ is in $S$ and the other in $\overline{S}$) is from the two contests corresponding to the edge $(u,v)$. As we have argued before, a player $i_v$ will never change their associated vertex $v$ in a better response move. A move by the player $i_v$ of shifting from $\alpha_v$ to $\beta_v$, or vice-versa, corresponds to moving the vertex $v$ across the cut $(S, \overline{S})$ in the max-cut problem. The value of the potential is effectively equal to the objective of the max-cut problem: we have multiplied the max-cut objective with a positive constant factor of $1/2$ and added a constant term. So, we have a one-one correspondence between local search in max-cut and dynamics in the single-activity budget game.

\textbf{Cost.} Note that for the budget game constructed above, the budget constraint was not very crucial in the analysis. No player had an incentive to produce an output of strictly more than $1$ along the activity they selected, because all the contests already awarded the prizes as soon as the players produced an output of $1$. In the cost model, we remove the budget constraint and set $c_{i, \ell} = 4\sum_{e \in E} w_e + 2$ for every player $i$ and activity $\ell$. The same analysis used in the budget model holds here too, the value of the potential function (Equation~\eqref{eq:potFun}) here is:
\begin{multline*}
    \phi(S,\overline{S}) = \left(4\sum_{e \in E} w_e + 2\right) |V| \\
    + \left( \frac{3}{2} \sum_{e \in E} w_e + \frac{1}{2} \sum_{(u,v) \in E, u \in S, v \in \overline{S}} w_{u,v} \right) - \left(4\sum_{e \in E} w_e + 2\right) |V| \\
    = \frac{3}{2} \sum_{e \in E} w_e + \frac{1}{2} \sum_{(u,v) \in E, u \in S, v \in \overline{S}} w_{u,v},
\end{multline*}
where the positive contribution of $|V| (4\sum_{e \in E} w_e + 2)$ is from the $|V|$ vertex contests, the contribution of $3 w_{u,v}/2$ or $2 w_{u,v}$ is from the two contests corresponding to the edge $(u,v)$, and a negative contribution of $|V| (4\sum_{e \in E} w_e + 2)$ is due to the cost of the players. The rest of the argument follows as before.

\textbf{Mixed-strategy Nash equilibrium.} In the {\sc Max Cut} problem, a randomized solution where each vertex is on either side of the cut with a probability of $1/2$ cannot be improved by moving just one vertex to a different side (or to a different distribution over the two sides). So, it is locally stable. We apply the same idea to find an MNE in the single-activity contest game just constructed.

Consider the following mixed strategy profile: for every $v \in V$, there is a unique player $i_v$ who produces an output of $1$ along one of the two activities $\alpha_v$ or $\beta_v$ with a probability of $1/2$ each. As argued before, player $i_v$ would not want to produce output along any other $\alpha_{u}$ or $\beta_u$ for $u \neq v$ because of the vertex contests. Also, given that every other player is following this strategy, player $i_v$ has the same utility for selecting either $\alpha_v$ or $\beta_v$, so $(1/2, 1/2)$ is as good a mixed-strategy as any other. So, we have an equilibrium.
\qed\end{proof}

\subsubsection{Fixed Parameter Tractability for Single-Activity Models}
\begin{theorem}\label{thm:fptSingleActivityN}
In the single-activity models, both cost and budget, we can compute a pure-strategy Nash equilibrium in polynomial time if the number of players is a constant. 
\end{theorem}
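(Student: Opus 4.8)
The plan is a brute-force search over a polynomially bounded set of canonical pure strategy profiles, exploiting the fact that with $n = O(1)$ players each player has only polynomially many strategies that matter. First I would record the following structural fact about single-activity games. A single-activity strategy of player $i$ consists of picking one activity $\ell$ (among those available to $i$) together with a feasible output level $b \ge 0$ along it; in the linear model, $i$ wins contest $j$ exactly when $b \ge \tau_{j,\ell} := 1/w_{j,\ell}$ (with $\tau_{j,\ell} = +\infty$ when $w_{j,\ell}=0$). Hence, as $b$ increases over the feasible range, the set of contests won by $i$ passes through at most $m+1$ distinct values, changing only at the thresholds $\tau_{j,\ell}$. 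Define $\Sigma_i$ to be the set of feasible strategies of player $i$ in which the output along the chosen activity $\ell$ is either $0$ or one of the thresholds $\tau_{j,\ell}$, $j \in M$; then $|\Sigma_i| \le k(m+1)+1$, which is polynomial in the input size.

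Next I would argue that it suffices to search for a PNE inside $\Pi := \times_{i \in N}\Sigma_i$, via a two-sided claim. (i) \emph{Verification.} The only way player $i$'s strategy enters another player's utility is through the set $W_i$ of contests that $i$ wins, since this is all that the winner-counts $n_j$ depend on; moreover, any feasible strategy of $i$ with win-set $W$ on activity $\ell$ is weakly dominated (payoff-equivalent, in the budget model) by the strategy in $\Sigma_i$ that produces the smallest output on $\ell$ achieving $W$, namely $\max\{\tau_{j,\ell} : j \in W\}$ (or $0$ if $W=\emptyset$), which is again a threshold and has at most the original cost. Consequently a profile $\bb \in \Pi$ is a PNE of the whole game if and only if no player $i$ has a profitable deviation to a strategy in $\Sigma_i$. (ii) \emph{Existence.} By Theorem~\ref{thm:multPNE} a PNE exists; starting from one and replacing every player's output by the minimal one securing the same win-set leaves every count $n_j$ unchanged --- hence every player's prize income, and hence also every player's best-deviation payoff, unchanged --- while weakly increasing each player's own utility (costs can only drop), so the result is again a PNE and now lies in $\Pi$.

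With these two facts in hand the algorithm is immediate: since $n$ is constant, $|\Pi| = \prod_i |\Sigma_i| \le (k(m+1)+1)^n$ is polynomial, so I would enumerate all profiles in $\Pi$; for each, compute all counts $n_j(\bb)$ and all player utilities in polynomial time (each $v_j$ is given as part of the input and thus evaluable in polynomial time), and then test, for every player $i$, all $|\Sigma_i|$ candidate deviations. By (i) this correctly decides whether the profile is a PNE, and by (ii) at least one enumerated profile is a PNE, so the search succeeds. The cost model is handled identically, with the linear cost term $c_{i,\ell}\,b$ replacing (or supplementing) the budget constraint: the win-set-to-threshold reduction is unchanged, and here it is in fact strict that any PNE uses only threshold outputs, since overshooting the minimal output for one's win-set strictly lowers utility.

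The part that needs the most care is the reduction to the canonical strategy sets $\Sigma_i$ --- both halves of the claim in the second paragraph. Both rest on the single observation that a player interacts with the rest of the game solely through its win-set, combined with the linearity that makes each achievable win-set correspond to a threshold output and each single-activity feasible set an interval $[0,1/\beta_{i,\ell}]$ (so that lowering the output preserves feasibility); once this is set up cleanly, the remaining bookkeeping --- utility evaluation, deviation checks, and the count $(k(m+1)+1)^n = \mathrm{poly}$ --- is routine. I would also remark that the same scheme works for any single-activity model in which the per-contest, per-activity thresholds can be computed in polynomial time and feasible output sets are downward closed.
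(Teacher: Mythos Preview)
Your proposal is correct and follows essentially the same approach as the paper: both reduce each player's strategy space to a polynomial-size set of canonical choices (an activity together with a threshold output level), and then brute-force enumerate the $(\text{poly})^n$ resulting profiles. Your write-up is more careful in justifying why a PNE must exist within the canonical profile set and why checking only canonical deviations suffices, but the underlying argument is the same.
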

\begin{proof}[Theorem~\ref{thm:fptSingleActivityN}]
For the budget model, a player essentially chooses an activity among the $k$ activities; once she selects an activity, she can w.l.o.g. produce the maximum output in her budget for that activity. So, she essentially has $k$ actions.

Similarly, for the cost model, she essentially has at most $m \cdot k$ actions. First, the player chooses an activity among the $k$ activities, then she chooses the level of output for that activity. For a given activity, the criterion of a contest corresponds to a minimum level of output. When choosing the level of output for an activity, a player can w.l.o.g. choose among these minimum levels of output corresponding to the contests. As she increases her output for an activity from one level to another level with a higher output, she wins a superset of contests as she was winning previously.

As there are $n$ players and each player (effectively) has at most $m \cdot k$ actions ($k \le m \cdot k$ for budget model), therefore there are at most $(m \cdot k)^n$ action profiles. We can enumerate over these profiles and find the PNE is polynomial time if $n$ is a constant.
\qed\end{proof}

\begin{theorem}\label{thm:fptSingleActivityM}
In the single-activity budget model, we can compute a pure-strategy Nash equilibrium in polynomial time if the number of contests is a constant. 
\end{theorem}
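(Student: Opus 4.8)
The plan is to exploit the exact-potential structure established in Theorem~\ref{thm:multPNE}: a profile that maximizes the potential $\phi^B$ is a PNE~\cite{monderer1996potential}, so it suffices to compute a maximizer of $\phi^B$, and I would do this by dynamic programming over the vector of winner-counts $\mathbf n = (n_1,\dots,n_m)\in\{0,1,\dots,n\}^m$, whose range has size $(n+1)^m$ --- polynomial in the input size when $m$ is a constant.

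The first step is to argue that, for the purpose of maximizing $\phi^B$, each player's strategy collapses to a choice of \emph{which subset of contests to win}. In a single-activity budget game a pure strategy of player $i$ picks one activity together with an output level (or is the null action $\bm{0}$); since there are no costs and each $f_j$ is increasing, replacing the chosen output by the maximal feasible output along the same activity can only enlarge the win-set $\{j : f_j(\bb_i)\ge 1\}$, hence can only enlarge every $n_j$, hence (as $v_j\ge 0$) can only increase $\phi^B$. So the supremum of $\phi^B$ is attained at a \emph{reduced} profile in which every player plays $\bm{0}$ or the maximal output along a single activity; such a profile is described by subsets $(T_1,\dots,T_n)$ with each $T_i$ drawn from the family $\mathcal{F}_i=\{\emptyset\}\cup\{T_{i,\ell}:\ell\in K\}$, where $T_{i,\ell}\subseteq M$ is the set of contests $i$ wins at maximal output on activity $\ell$. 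In the linear model $T_{i,\ell}=\{j\in M : w_{j,\ell}\ge\beta_{i,\ell}\}$, so every $\mathcal{F}_i$ is computable in polynomial time (and the same reduction works for any single-activity budget model in which the per-axis feasible outputs are bounded and the induced win-sets are polynomial-time computable).

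The second step is the dynamic program. On a reduced profile $\phi^B(\bb)=\sum_{j\in M}\sum_{t=1}^{n_j} V_j/t$ is a function of $\mathbf n$ alone, so it suffices to find, among all count vectors reachable by a reduced profile, one maximizing this quantity. Processing players $1,\dots,n$, I would maintain the set $R_i\subseteq\{0,\dots,n\}^m$ of count vectors achievable by assigning to players $1,\dots,i$ subsets $T_1\in\mathcal{F}_1,\dots,T_i\in\mathcal{F}_i$: here $R_0=\{\bm{0}\}$ and $R_i=\{\mathbf c+\chi_T : \mathbf c\in R_{i-1},\ T\in\mathcal{F}_i\}$, discarding vectors with a coordinate exceeding $n$, where $\chi_T\in\{0,1\}^m$ is the indicator vector of $T$. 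Each $R_i$ has at most $(n+1)^m$ elements and is computed from $R_{i-1}$ in time $O((n+1)^m\cdot\min(k,2^m)\cdot m)$, so the whole computation is polynomial when $m$ is constant. Storing, for each element of each $R_i$, one pair $(\mathbf c,T)$ that produced it lets me back-trace from $\argmax_{\mathbf c\in R_n}\sum_j\sum_{t=1}^{c_j}V_j/t$ to an assignment of subsets and then to an activity realizing each player's subset; the resulting reduced profile maximizes $\phi^B$ and hence, by Theorem~\ref{thm:multPNE} and~\cite{monderer1996potential}, is a PNE.

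The step I expect to be the crux is the first one: the point that ``max out the chosen activity'' is without loss of generality \emph{for maximizing the potential} --- not merely for an individual best response --- is exactly what makes the winner-count vector a sufficient state and keeps the state space of size $(n+1)^m$. It is also the reason this argument does not obviously carry over to the cost model, where raising output trades off against cost and maximal output is no longer dominant, so a player's relevant choice is not captured by a single subset of contests.
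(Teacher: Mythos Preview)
Your proof is correct, but it takes a genuinely different route from the paper. The paper also uses the key observation that $\phi^B$ depends only on the winner-count vector $(n_1,\dots,n_m)\in\{0,\dots,n\}^m$, but instead of maximizing $\phi^B$ directly it simply runs best-response dynamics: whenever the current profile is not a PNE, some player has a strictly improving deviation, which (by the exact-potential property) strictly increases $\phi^B$; since $\phi^B$ can take at most $(n+1)^m$ distinct values, the dynamics terminates in at most $(n+1)^m$ steps, and each best response is polynomial-time computable by Theorem~\ref{thm:fptBestResponse}. Your approach is more constructive --- the dynamic program produces a \emph{potential-maximizing} profile rather than an arbitrary PNE --- and makes explicit the reduction of each player's strategy to a choice of win-set, whereas the paper's argument is shorter and sidesteps the need to argue that ``max out the chosen activity'' is WLOG. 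Both approaches hinge on the same $(n+1)^m$ bound on the state space, and you correctly note (as does the paper) that the argument does not carry over to the cost model because the $-\sum_i c_i(\bb_i)$ term in $\phi^C$ breaks the dependence on the count vector alone.
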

\begin{proof}[Theorem~\ref{thm:fptSingleActivityM}]
If we carefully observe the potential function of the budget model, $\sum_{j \in M} \sum_{\ell \in [n_j(\bb)]} v_j(\ell)$, given in \eqref{eq:potFun}, we can see that for every contest $j \in M$, $\sum_{\ell \in [n_j(\bb)]} v_j(\ell)$ can take at most $n+1$ different values depending upon the value of $n_j(\bb)$. So, the potential function can take at most $(n+1)^m$ different values. 

If we are not at a PNE, there is a best-response move that strictly increases the potential function, and we can find such a best-response in polynomial time (even for the multi-activity model, see Theorem~\ref{thm:fptBestResponse}). So, we can find a PNE in polynomial time by repeatedly finding a best response move and strictly increasing the potential function, and this process will terminate in at most $(n+1)^m$ many steps.
\qed\end{proof}
Note that idea used in the proof of Theorem~\ref{thm:fptSingleActivityM} does not directly carry over to the cost model because of the additional $\left( - \sum_{i \in N} c_i(\bb_i) \right)$ term in the potential function for the cost model.

\end{document}